\newcolumntype{Z}{>{\raggedright}X}
\DeclareMathOperator*{\argmax}{arg\,max}
\newtheorem{theorem}{Theorem}
\newtheorem{definition}{Definition}
  {
      \theoremstyle{plain}
      \newtheorem{assumption}{Assumption}
  }
  {
      \theoremstyle{plain}
      \newtheorem{Proposition}{Proposition}
  }
\begin{document}

\begin{frontmatter}

%% Title, authors and addresses

%% use the tnoteref command within \title for footnotes;
%% use the tnotetext command for theassociated footnote;
%% use the fnref command within \author or \address for footnotes;
%% use the fntext command for theassociated footnote;
%% use the corref command within \author for corresponding author footnotes;
%% use the cortext command for theassociated footnote;
%% use the ead command for the email address,
%% and the form \ead[url] for the home page:
%% \title{Title\tnoteref{label1}}
%% \tnotetext[label1]{}
%% \author{Name\corref{cor1}\fnref{label2}}
%% \ead{email address}
%% \ead[url]{home page}
%% \fntext[label2]{}
%% \cortext[cor1]{}
%% \affiliation{organization={},
%%             addressline={},
%%             city={},
%%             postcode={},
%%             state={},
%%             country={}}
%% \fntext[label3]{}

\title{Tree-based Focused Web Crawling \\ with Reinforcement Learning}

%% use optional labels to link authors explicitly to addresses:
%% \author[label1,label2]{}
%% \affiliation[label1]{organization={},
%%             addressline={},
%%             city={},
%%             postcode={},
%%             state={},
%%             country={}}
%%
%% \affiliation[label2]{organization={},
%%             addressline={},
%%             city={},
%%             postcode={},
%%             state={},
%%             country={}}

\affiliation[inst1]{organization={School of Electrical and Computer Engineering, National Technical University of Athens},
            addressline={9 Iroon Polytechniou}, 
            city={Athens},
            postcode={15780}, 
            % state={State One},
            country={Greece}}

\affiliation[inst2]{organization={Archimedes/Athena RC},%Department and Organization
            % addressline={Address One}, 
            % city={Athens},
            % postcode={00000}, 
            % state={State One},
            country={Greece}}

\affiliation[inst3]{organization={Institute of Informatics and Telecommunications, NCSR "Demokritos"},%Department and Organization
            % addressline={Address One}, 
            % city={Athens},
            % postcode={00000}, 
            % state={State One},
            country={Greece}}

\affiliation[inst4]{organization={SciFY P.N.P.C.},%Department and Organization
            % addressline={Address One}, 
            % city={Athens},
            % postcode={00000}, 
            % state={State One},
            country={Greece}}
            
\affiliation[inst5]{organization={Max Planck Institute for Software Systems},%Department and Organization
            % addressline={Address One}, 
            % city={Athens},
            % postcode={00000}, 
            % state={State One},
            country={Germany}}

\author[inst1,inst2]{Andreas Kontogiannis\corref{cor1}}
\ead{a.kontogiannis@athenarc.gr}

\author[inst1,inst3]{Dimitrios Kelesis}
\ead{dkelesis@iit.demokritos.gr}

\author[inst2,inst5]{Vasilis Pollatos}
\ead{v.pollatos@athenarc.gr}

\author[inst3,inst4]{George Giannakopoulos}
\ead{ggianna@iit.demokritos.gr}

\author[inst3]{Georgios Paliouras}
\ead{paliourg@iit.demokritos.gr}

\cortext[cor1]{Corresponding author}
% \fntext[fn1]{Postal Address: 9, Iroon Polytechniou St, Athens, 15780}

\begin{abstract}
%% Text of abstract
A focused crawler aims at discovering as many web pages and web sites relevant to a target topic as possible, while avoiding irrelevant ones. Reinforcement Learning (RL) has been a promising direction for optimizing focused crawling, because RL can naturally optimize the long-term profit of discovering relevant web locations within the context of a reward. In this paper, we propose TRES, a novel RL-empowered framework for focused crawling that aims at maximizing both the number of relevant web pages (aka \textit{harvest rate}) and the number of relevant web sites (\textit{domains}). We model the focused crawling problem as a novel Markov Decision Process (MDP), which the RL agent aims to solve by determining an optimal crawling strategy. To overcome the computational infeasibility of exhaustively searching for the best action at each time step, we propose Tree-Frontier, a provably efficient tree-based sampling algorithm that adaptively discretizes the large state and action spaces and evaluates only a few representative actions. Experimentally, utilizing online real-world data, we show that TRES significantly outperforms and Pareto-dominates state-of-the-art methods in terms of harvest rate and the number of retrieved relevant domains, while it provably reduces by orders of magnitude the number of URLs needed to be evaluated at each crawling step.
\end{abstract}

%%Graphical abstract
% \begin{graphicalabstract}
% \includegraphics{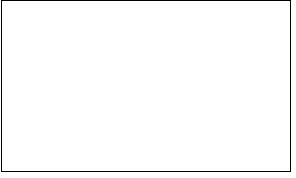}
% \end{graphicalabstract}

%%Research highlights
% \begin{highlights}
% \item Research highlight 1
% \item Research highlight 2
% \end{highlights}

\begin{keyword}
%% keywords here, in the form: keyword \sep keyword
Focused Web Crawling \sep Reinforcement Learning \sep Tree-Based Sampling \sep State and Action Spaces Discretization \sep Markov Decision Process 
\end{keyword}

\end{frontmatter}

%% \linenumbers

%% main text
\section{Introduction}
Focused web crawlers \cite{CHAKRABARTI19991623} are intelligent agents which, given a topic of interest, find a strategy to search for relevant web pages. Due to the information abundancy on the Web, focused web crawling is a tool in high-demand by today's web services and users for retrieving large amount of information, relevant to the topic of interest. Typically, a focused crawler starts from an initial set of highly relevant data, in the form of URLs, called \emph{seeds}. The crawler utilizes the seeds to bootstrap the collection process, extracting their outgoing links, or \emph{outlink URLs}, and storing them in a URL collection, called the crawl \textit{frontier} \cite{10.1145/988672.988714}. The crawler selects the most promising URL from the frontier (based on some evaluation criterion), visits the corresponding web page and classifies it as relevant or irrelevant according to the user's interests. Then, recursively, the outlinks of the visited page are stored in the frontier. Based on relevance feedback by the user, the evaluation function that scores the frontier's URLs is adjusted. In a typical focused crawling setting, the total number of retrieved URLs is user-defined and the whole process terminates when that number of retrieved URLs is reached. A common goal in focused crawling, as in \cite{10.1145/1242572.1242632, 10.1007/s11280-015-0331-7, patil2016implementation, 10.1145/3308558.3313709, neelakandan2022automated, zhang2021dsdd, liu2022applying}, is maximizing: (a) the percentage of relevant retrieved web pages (often known as the \textit{harvest rate}, or \textit{accuracy}) and (b) the number of retrieved relevant different web sites (which we denote by \textit{domains}). 

Many approaches, such as \cite{6165295, RAJIV20213, SALEH2017181, Elaraby2019ANA, shrivastava2022efficient}, use classifiers to determine the crawling strategy by estimating whether the web pages in the frontier are relevant to the target topic. However, such approaches have limitations, since they are mostly capable  of finding relevant web pages within immediate crawling steps and fail to discover relevant pages that require the selection of possibly not relevant URLs from the frontier to reach them. Therefore, the crawler's effectiveness highly depends on the selection of seeds, which should extract a large number of outlink URLs into the frontier from different regions of the Web. To deal with such problems, approaches, such as \cite{10.1007/s11280-015-0331-7, 10.1145/3308558.3313709, zhang2021dsdd}, aim to enrich focused crawling with other search operations, including backward crawling, seed identification and/or keyword discovery through the use of search engines. However, such approaches may be impractical for an average user, since they require paid APIs. Thus, in this paper we focus on internally informed focused crawling, where no external sources of information are utilized.

% Such classifiers can be used to (a) identify the relevant web pages (in the form of URLs) in the frontier, assigning them high evaluation values, which imply high crawling priorities, (b) give relevance feedback to the strategy of the crawler once a web page has been visited and (c) evaluate the relevance of the crawled data at the end of the process. 

Reinforcement Learning (RL) has been a promising direction for optimizing focused crawling \cite{Rennie99efficientweb}, since RL can naturally optimize the long-term profit of discovering relevant web locations within the context of a reward signal. In particular, RL paramete- rizes the crawler agent's strategy (in the form of a learnable policy) to effectively \textit{explore} different but interesting locations of the Web, and also explicitly quantifies the value (i.e. the expected return) of the available actions (e.g. URLs in frontier) at each decision making (crawling) step, without the need of an external heuristic, and possibly handcrafted, criterion. In this way, RL crawling policies should be able to identify \textit{hubs} \cite{ding2004link} (web pages connected with relevant URLs on the Web) possibly reachable by following a sequence of not relevant URLs. That is, in cases where no high relevance URLs exist in the frontier, a good RL strategy may prioritize to reach hubs for retrieving highly relevant web pages at a later crawling step.

Although RL seems reasonable to use for optimizing focused crawling, most state-of-the-art focused crawlers do not exploit RL, due to some current challenges. Since the frontier size grows significantly in time, it is computationally infeasible in practice for the RL crawler agent to select the best URL (possibly associated with an action in the RL context) from the frontier. In particular, the brute force method (denoted by \textit{synchronous method} \cite{10.1007/978-3-319-91662-0_20}) for scoring all URLs in the frontier at each time step is shown to require an infeasible computational cost for long crawling sessions (see Section 2). On the other hand, RL approaches, such as \cite{Rennie99efficientweb, 10.1007/978-3-319-91662-0_20}, adopt sub-optimal heuristic methods, in order to approximate the synchronous method and optimize focus- ed crawling, which may harm performance dramatically (see Section 3). In particular, such methods lack substantial informativeness in their state representation, in their relevance estimation methodology and in the selection of candidate actions. Furthermo- re, such approaches  formulate the focused crawling problem as a Markov Decision Process (MDP), to solve it with RL, in a somewhat unnatural way that either forces the crawler agent to follow a single path of web pages or allows it to deviate from the successive order of state transitions.

In this paper, we deal with the above challenges and aim to establish RL as an efficient state-of-the-art method for focused crawling towards maximizing both the harvest rate and the number of retrieved domains, without requiring any access to possibly paid APIs. After formulating the problem as a mathematically sound Markov Decision Process (MDP), we face the problem of searching for optimal actions in intractably large state and action spaces and propose an efficient tree-based sampling algorithm. This algorithm adaptively compresses the search space and at each crawling step it selects the best-estimated action (associated with a URL to fetch next) over only a few representative actions. Our main contributions are the following:

\begin{itemize}
\item{} We propose a novel RL-based focused crawling framework called \textit{TRES (\textbf{T}ree \textbf{RE}inforcement \textbf{S}pider)}. The framework adopts a \textit{novel, mathematically sound Markov Decision Process (MDP) formulation} for focused crawling. 
\item We introduce the \textit{Tree-Frontier} sampling algorithm, which might be of independ- ent interest outside the context of focused crawling. The algorithm provides efficient and effective representation of the crawl frontier, through adaptive discr- etization of the large state and action spaces. Moreover, we provide mathematical analysis and intuition of the algorithm explaining its logic and interpreting the good experimental performance. 
% \item{} We propose a novel approach of utilizing topic-related keywords, based on automatic suggestions of such keywords - we call this process \emph{keyword set expansion} - to form a richer keyword collection. The approach utilizes keyword similarities to construct vector representations of candidate URLs summarizing semantic and statistical properties of the URLs.
\item We show that \textit{TRES} \textit{outperforms and Pareto-dominates other state-of-the-art} focused crawlers in three test settings, in terms of both harvest rate and the number of retrieved domains.
% , utilizing {no more than a single seed URL}.
\item We show that the use of Tree-Frontier \textit{provably improves the time complexity} of the impractical brute force method (denoted by synchronous method \cite{10.1007/978-3-319-91662-0_20}) for estimating all available actions at a given state.
\end{itemize}

The remainder of the paper is organized as follows. In Section 2, we present the problem definition and the proposed framework, along with its mathematical analysis and intuition. The experimental details, the comparisons with baseline and state-of-the-art methods and the ablation study of the TRES are presented in Section 3. In Section 4, we position this paper with respect to the related work.  Section 5 summarises the conclusions of this paper and suggests future research directions. 

% You must have at least 2 lines in the paragraph with the drop letter
% (should never be an issue)

% \hfill mds

% \hfill August 26, 2015

\section{Problem and Method Overview} 
In this section, we formulate the problem setting that we aim to solve. First, we describe the keyword expansion problem that we deal with in order to discover keywords relevant to the topic of interest that will next guide the crawling process. Second, we provide the MDP formulation for the focused crawling problem and present the overview of the TRES\footnote{Unlike most focused crawlers in the literature, the implementation code of TRES is publicly available: \href{https://github.com/ddaedalus/tres}{https://github.com/ddaedalus/tres}.} framework.

    \subsection{Problem Definition} 
        \subsubsection{Keyword Expansion Problem}
        Our goal is to find a set of keywords that is sufficient to identify  any document, belonging to a given topic. In order to do that, similar to \cite{10.1007/s11280-015-0349-x}, we start with a small number of keywords and adopt a keyword expansion strategy in order to identify as many keywords as possible. Formally,
        let $D_C$ be the set of all web documents belonging to a target topic $C$. Also, let $K_C = \{k_1, ..., k_{N_K}\}$  be the finite set of all keywords that describe C. Given an initial small set of keywords, $KS = \{k_1, ..., k_{N_{KS}}\}$, where $KS \subset K_C$, and a corpus of text documents $D_{tr} = \{t_1, ..., t_N\}$ where each $t_i = \{w^i_1, ..., w^i_{n_i}\}$ contains $n_i$ candidate keywords, our goal is to expand $KS$ so that $KS = K_C$.

        \subsubsection{Focused Crawling Problem}
        Let $U_C$ be the finite set of URLs corresponding to all web pages that belong to the target topic $C$. Given $C$ and a small set of \textit{seeds} $U_S$, where $|U_S| << |U_C|$, our goal is to expand $U_S$ over a $T$-time step-attempt-process of discovering URLs belonging to $U_C \setminus U_S$. Since the crawling process is terminated after $T$ time steps, $U_S$ would be expanded by $T$ new URLs. We will refer to these $U_C \setminus U_S$ URLs discovered as \textit{relevant}.

        In the rest of this section, we describe our proposed TRES framework. First, we formulate the focused crawling problem as an MDP. Then, we present the keyword expansion strategy that our crawler adopts, in order to discover keywords that are relevant to the target topic and aim to enhance the focused crawling performance. 
        % Next, we describe KwBiLSTM, a deep neural network estimating the reward function of our RL setting. 
        Next, we present the proposed focused crawling algorithm, which utilizes RL, in order to find good policies for selecting URLs, and the Tree-Frontier algorithm for efficient sampling URLs from the frontier.

        \subsection{Modeling Focused Crawling as an MDP} \label{sec:MDPFormulation}
        % Similarly to \cite{10.1007/978-3-319-91662-0_20}, we utilize the crawler as an agent, which exists in an interactive environment providing states, actions and rewards. Its goal is to maximize the cumulative reward, which is equivalent to discovering as many relevant URLs as possible. The latter is exactly the definition of the focused crawling task. To model a focused crawling setting as an MDP, we first describe some specific concepts. 
        
        Similar to \cite{Rennie99efficientweb, 10.1007/978-3-319-91662-0_20}, we consider the focused crawler to be an agent, which operates in an interactive environment. Within such an environment, the agent changes states, takes actions and receives rewards, with the goal of maximizing the cumulative reward. In this context, focused crawling can be modeled as an MDP unifying the traditional use of the frontier of a focused crawler with the action set of an RL agent.
        
        \subsubsection{Preliminaries}
        
        Let $U$ be the finite set of all URLs on the Web. Let Page: $U \rightarrow P$ be the function that maps URLs to corresponding web pages in $P$. $P$ refers to the set of existing pages on the Web. Let $\text{Outlink}: U \rightarrow 2^U$ be the function that matches a given URL $u \in U$ to a set containing its outlink URLs. 
        
        Let $G$ be the directed graph of the {Web}. The node set of $G=(U,E)$ is equivalent to the URL set $U$. The edge set $E$ of $G$ is defined as follows: a node $u_i \in U$ has a directed edge to another node $u_j \in U$ if $u_j \in \text{Outlink}(u_i)$. We represent the corresponding edge with $(u_i,u_j) \in E$.
        
        In focused crawling, the crawler gradually traverses a subgraph of G. Each time it fetches a new URL, the corresponding node (associated with its web page) is added to the subgraph. Let $g_t$ be the directed subgraph of $G$ that the crawler has traversed until time step $t$. For $t=0$ it holds that $g_0=(U_S,\emptyset)$; i.e. the initial subgraph has its node set equal to the seed URL set $U_S$ and no edges. We denote the URL fetched at $t$ as $u^{(t)}$. Then, $g_{t+1}$ is $g_t$ expanded by node $u^{(t+1)}$ and edge $(u^{(k)},u^{(t+1)})$, where $0 \leq k < t+1$. Therefore, $g_t$ contains all $u^{(k)}$ nodes, for $0 \leq k \leq t$. Note that for each time step $t$, $g_t$ is a forest of $|U_S|$ ordered trees.
        
        Also, let \textit{closure}, or \textit{crawled set} as defined in \cite{10.1145/2631775.2631795}, $C_t$ be the set of fetched URLs until time step $t$. Note that closure $C_t$ is equal to the nodes fetched at time step $t$. From the definition of closure, the crawler fetches a URL at most once, so a URL has a unique web path for the whole crawling process. Now, we introduce the definition of a web path, which will be necessary to define appropriately the frontier.
        
        % More formally:
        
        % \begin{definition}[Closure]
        % We define closure $C_t$ at time step $t$ as:
        % $C_t = \{u^{(k)} : 0 \leq k \leq t \}$; i.e. the set of nodes of the traversed subgraph $g_t$.
        % \end{definition}
        
        \noindent
        % Now, we introduce two definitions that are important for modeling the MDP. 
        
        % \begin{definition}[Parent URL]
        % If a URL $u_i$ has a directed edge $(u_iu_j)$ to a URL $u_j$ in $g_t$, then we call $u_i$ the parent URL of $u_j$ and we write $u_i = par(u_j)$. 
        % \end{definition}  

        \begin{definition}[Web Path]
        The web path of the URL $u^{(t)}$ fetched by the crawler at time step $t$, denoted by $path(t)$, is the path of $g_t$ from a given seed URL in $U_S$ to $u^{(t)}$.
        \end{definition}  
        
        % \begin{definition}[Web Path]
        % We define the web path $path(t)$ of the URL $u^{(t)}$ fetched by the crawler at time step $t$ as the longest path of $g_t$ that includes $u^{(t)}$. Practically, this is a path from a given seed URL in $U_S$ to $u^{(t)}$.
        % \end{definition}  

        \noindent
        Let \textit{frontier}, or \textit{crawl frontier}, $F_t$ be the set of all available outlink URLs that the crawler has not fetched until time step $t$, but were extracted by web pages fetched before $t$. More formally:
        
        \begin{definition}[Frontier]
        We define frontier $F_t$ at time step $t$, given $C_t$, as:
        \begin{align*}
        % \resizebox{1\linewidth}{!}{
        F_t = \left\{ (path(\tau), u): \tau \geq 0, u^{(\tau)} \in C_t, u \in \text{Outlink}(u^{(\tau)}), u \not \in C_t \right\} 
        % }
        \end{align*}
        \end{definition}  
        
        \noindent
        In the above definition, notice that we also include the corresponding web paths of those URLs, from which we extract the available URLs to fetch at time step $t$. This definition of frontier aims to allow the crawler agent to identify patterns from single web paths that possibly lead to hubs or relevant pages. Moreover, the above definition does not force the crawler agent to follow a single path of web pages, as all available web paths are included in the frontier at each time step.
        
        \subsubsection{MDP formulation}
        
        Based on the above, we formulate the crawling process an the MDP, represented by the tuple $(S, A, P_{t,a}, R)$. We denote a \textit{state} $s_t \in S$ at time step $t$ as the directed subgraph of $G$ that the crawler has traversed at $t$; i.e. $s_t=g_t$. We define actions ${A}_t \subset A$ at time step $t$ given $s_t$ (and thus $g_t$ and $C_t$), as the set of all available edges through which $g_t$ can be expanded by one new node; that is: 
        \begin{align}
        % \resizebox{1\linewidth}{!}{
            {A}_{t} = \left\{ (v,u) : v \in C_t, u \in \text{Outlink}(v), u \not \in C_t \right\}
        % } 
        \end{align}
        
        \noindent
        Also, we define $P_{t,a}(s,s') = \mathbb{P}[S_{t+1} = s' \  \vert \  s_t = s, a_t = a]$ as the transition probability of selecting the edge $a$ to expand state $s = g_t$ in order to transition to a state $s' = g_{t+1}$. In our setting, we utilize deterministic transitions. Furthermore, we define a binary reward function $R\colon S \times A \times S \rightarrow \mathbb{R}$ that equals $1$ if the web page of the URL $u^{(t+1)}$ crawled at time step $t+1$ is relevant and $0$ otherwise. For brevity, let $r_{t+1}$ be the reward observed at time step $t$, that is $r_{t+1} = R(s_t, a_t, s_{t+1})$. 
        
        % Notice that the number of different states is equal to the number of all different directed subgraphs of Web graph $G$, given the seed set $U_S$.
        The initial state $S_0 = g_0$ consists of the seed URLs. A final state is not naturally defined since crawling is an infinite-horizon problem. Assuming that the crawling process stops after $T$ time steps, $g_T$ is the final state. 
        
        From all the above, unlike previous approaches that regard action selection to be related to URL relevance estimation, to our knowledge, our approach is the first to both expand multiple web paths of the Web graph and preserve the successive order of state transitions by always selecting actions that are explicitly available in the current state. Also, observe that in our MDP, given a time step $t$, selecting a $(path(\tau), u)$ from $F_t$ is equivalent to selecting action $(u^{(\tau)},u)$ from ${A}_t$. Recall that ${A}_t$ is the set of all available actions at time step $t$. Now we can match these actions to the elements of frontier $F_t$ with URLs not in closure $C_t$. Therefore, we can relate the traditional use of the frontier to the action set of an RL focused crawler agent for all time steps.
        
        The agent aims to maximize the expected discounted cumulative reward, which is formulated as $G_t = \sum_{\tau=t}^{\infty}\gamma^{\tau-t}r_{\tau}$.
        Here, $\gamma \in [0,1]$ is a discount factor which is used for trading-off the importance of immediate and future rewards. Under a stochastic policy $\pi$, the Q-function of a state-action pair $(s,a)$ is defined as follows
        
        \begin{equation}
        Q^{\pi}(s,a) = \mathbb{E}\left[G_t \mid s_t=s, a_t=a, \pi \right] \nonumber
        \end{equation}
        
        \noindent
        which can also be computed recursively with bootstrapping:
        
        \begin{equation}
        Q^{\pi}(s,a) = \mathbb{E}\left[r_t + \gamma \mathbb{E}_{a' \sim \pi(s')}[Q^{\pi}(s',a')] \mid s_t=s, a_t=a, \pi \right] \text{.} \nonumber
        \end{equation}
        
        \noindent
        The Q-function measures the value
        of choosing a particular action when the agent is in a given state. We define the optimal policy $\pi^*$ under which we receive the optimal $Q^*(s,a) = \max_{\pi}Q^{\pi}(s,a)$. At time step $t$, given a state $s_t$, under the optimal policy $\pi^*$, the agent selects action $a_t = \argmax_{a' \in {A}_t} Q^*(s_t,a')$. Therefore, it follows that the optimal Q-function satisfies the Bellman equation:
        
        \begin{equation}
        Q^*(s,a) = \mathbb{E}\left[r_t + \gamma{} \max_{a' \in {A}_t} Q^*(s',a') \mid s_t=s, a_t=a, \pi \right]\nonumber
        \end{equation} 

    In practice, to decide whether a retrieved web page is relevant or not, we train a classifier that will constitute the reward function of the proposed MDP. To this aim, in practice, we utilize a bidirectional LSTM (BiLSTM) \cite{10.1007/11550907_126}, a model that has shown remarkable performance in web page classification and relevance estimation for focused crawling when trained on small (or medium) datasets \cite{shrivastava2022efficient, neelakandan2022automated, axiotis2021personalized}. We note that our choice of the traditional BiLSTM over a more complex Transformer-based network is  because the latter is more data-hungry and in some early experiments with small datasets it was less robust.

    \subsection{Discovering Keywords Relevant to the Target Topic}
    Keywords and seeds often play the role of a crawler's only prior knowledge about the target topic \cite{Menczer99adaptiveretrieval, 10.1145/1031114.1031117, zhang2021dsdd}. A collection of keywords that are irrelevant to the target topic can lead the crawler to follow URLs that diverge from the topic of interest. In this paper, we assume that an initial small keyword set $KS = \{k_1, ..., k_{N_K}\}$ is given as input, containing keywords that are all highly related to the target topic $C$. We also use a keyword expansion method that discovers new keywords from a corpus of text documents, $D_{tr}$. 

    \begin{algorithm}[t]
    \caption{Keyword Expansion Strategy}\label{alg:1}
        \hspace*{\algorithmicindent} \textbf{Input:} Initial keyword set $KS$ of size $N_K$, a corpus of text documents $d_{tr}$ \\  
        \hspace*{\algorithmicindent} \textbf{Output:} the expanded keyword set $K$ 
    \begin{algorithmic}[1]
        \State Initialize empty keyword set $K' = \{\}$
        \State Create a set $W$ of all words in $D_{tr}$
        \For {$w \in W$} 
            \If{$\frac{1}{N_K}\sum\limits_{k\in KS}cos(w,k) \geq b$} \Comment{Equation (\ref{eq:2})} 
                \State append $w$ in $K'$
            \Else 
                \State continue
            \EndIf
        \EndFor
        \State $K = KS \cup K'$
    \end{algorithmic}
    \end{algorithm}

    We consider all words of the texts from $D_{tr}$ to be candidate keywords of the target topic. Similarly to \cite{neelakandan2022automated, 10.1007/978-3-319-91662-0_20, dhanith2021word, liu2022applying, wang2023hunger}, we use the cosine similarity of word vectors to measure the similarity with the topic of interest. In particular, in our setting each candidate word is represented as a vector in $\mathbbm{R}^N$, using a pretrained word2vec\footnote{\href{http://vectors.nlpl.eu/explore/embeddings/en/}{http://vectors.nlpl.eu/explore/embeddings/en/}} \cite{DBLP:journals/corr/abs-1301-3781} model, that has been trained on the whole Wikipedia corpus. To decide whether a candidate word $w$ is a new keyword, we measure a simple semantic score $CS(w)$, as the average cosine similarity of $w$ with each of the existing keywords in the collection $KS$. 
    
    If the semantic score exceeds a given threshold $b$, $CS(w_i) > b$, then $w_i$ is regarded as a keyword and is stored in another keyword set $K'$. At the end of this process, the selected keywords are combined with those of the initial set: $K = KS \cup K'$. In this paper, empirically we set the threshold $b$ to the average cosine similarity of all $k \in KS$. More formally:

    \begin{align}
        b = \frac{1}{N_K(N_K-1)} \sum\limits_{i=1}^{N_K}{\sum\limits_{j \neq i}^{N_K}}{cos(k_i, k_j)} \label{eq:2}
    \end{align}    

    \noindent
    We present our keyword expansion strategy in Algorithm \ref{alg:1}. 
    Note that the strategy tends to be highly selective, since a new keyword must have a greater $CS$ score than some keywords in $KS$.

    \subsection{Optimizing Focused Crawling with Reinforcement Learning}
    In this subsection, we describe our RL focused crawler agent and the process of learning efficiently good Q-value approximations for selecting available URLs from the frontier. Our approach is divided into three parts: (a) representing the states and the actions, (b) efficiently sampling URLs from the frontier, and (c) training the agent with an RL algorithm. 
    
    \subsubsection{State-Action Representation}
    Similar to \cite{10.1007/978-3-319-91662-0_20}, we propose a \textit{shared state-action representation}. Let $\pmb{x}(s_,a_t)$ be the shared representation vector for a given state $s_t$ and an action $a_t$ at time step $t$. Recall that an available action $a$ (at time step $t$) is related to a URL $u$ in the corresponding frontier, $F_t$, such that $a = (u^{(\tau)}, u)$, where $\tau$ is the time step when the crawler fetched URL $u^{(\tau)}$ and $u$ is one of the outlinks of $u^{(\tau)}$, that is $u \in \text{Outlink}(u^{(\tau)})$. 

    To represent state $s_t$ in $\pmb{x}(s_t, a_t)$, given the action $a_t$ and the web page $u^{(\tau)}$ that was fetched by the crawler at time step $\tau$ and that extracted the URL of this action, we simply aggregate information of $path(\tau)$; i.e. the web path leading to URL $u^{(\tau)}$. Specifically, similarly to \cite{10.1007/978-3-319-91662-0_20}, we use the following scalar state features: the reward received at $\tau$, the inverse of the distance of $u^{(\tau)}$ from the closest relevant node in $path(\tau)$ and the relevance ratio of $path(\tau)$. Intuitively, these features describe the relevance of the web path related to the given outlink URL. 
        
    To represent $a_t$ in $\pmb{x}(s_t,a_t)$, our method differs from \cite{10.1007/978-3-319-91662-0_20} and uses the following scalar features: the existence of keywords in the URL text, the existence of keywords in the anchor text of the source page and a probability estimation of its relevance given by the classifier (e.g. a BiLSTM), based on the outlink URL's anchor text. 
    % A web page belonging to a relevant web site is likely to lead to other relevant ones, even if this particular page is not relevant. The web pages with this property are known as hubs \cite{Rennie99efficientweb}. 
    We also introduce two scalar \textit{web domain} action features that describe the expected relevance of a domain (web site) until the current time step, which are the following: (a) the ratio of the relevant web pages of the URL's domain found until time step $t$ and (b) the unknown domain relevance which is set to $0.5$ if the specific web site (i.e. the domain portion of the URL) of the web page has not been visited before, otherwise $1$. These features are used on the assumption that the crawler is more likely to fetch more relevant web pages by trying to avoid less relevant or unknown domains.

    \begin{figure}[t]
        \centering
        \includegraphics[scale=0.15]{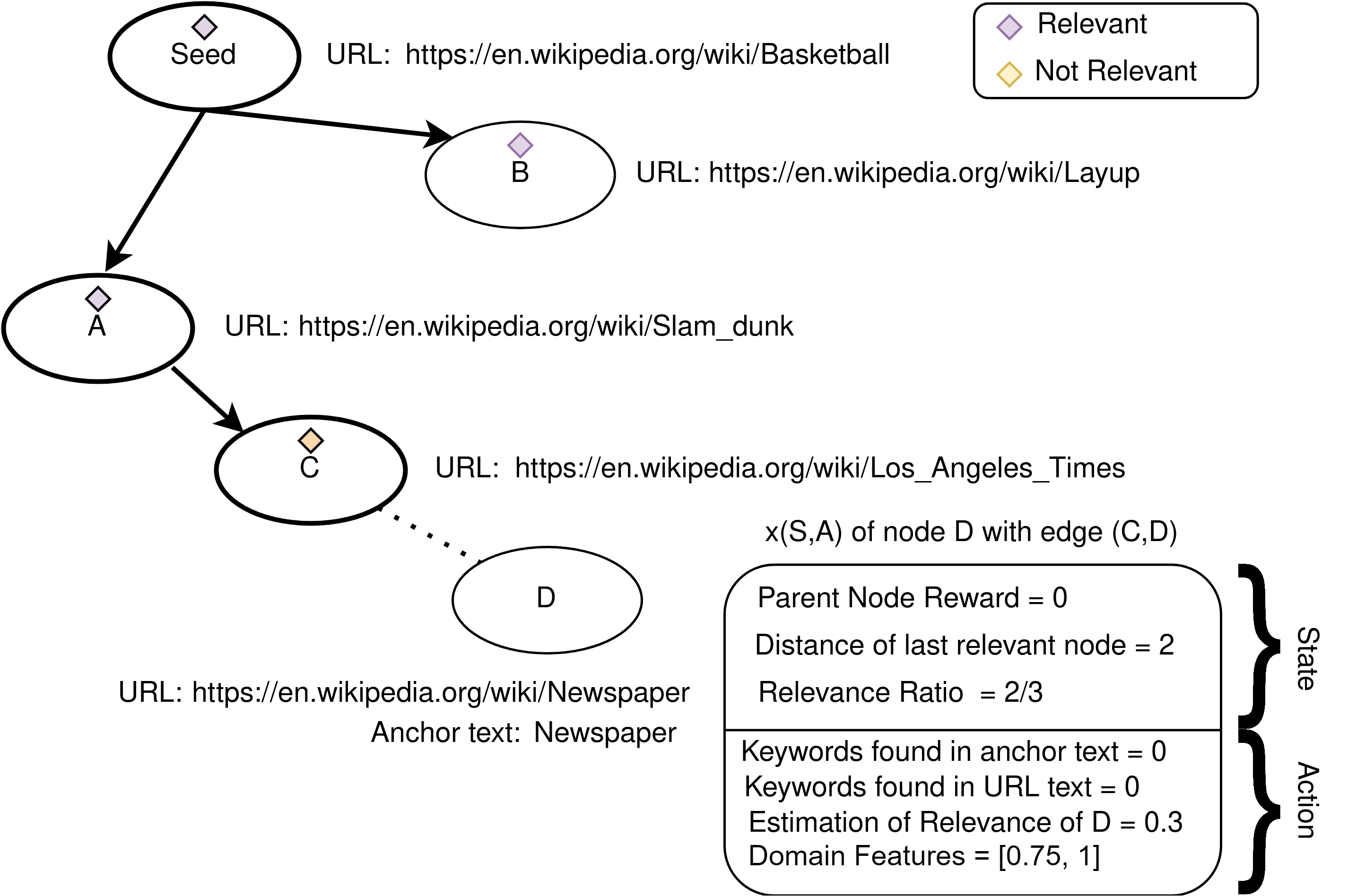}
        \caption{A simple state and action example: The illustrated state contains the nodes Seed, A, B and C, while the edge (C, D) is an action for this state.}
        \label{fig:x_repr}
    \end{figure}

    An example of a state-action pair is depicted in Figure \ref{fig:x_repr}. In this example, we examine crawling in the Sports domain. Starting from one seed URL, the crawler has fetched two relevant web pages (A, C) and one irrelevant (B). The current state is the Web subgraph containing the seed and the web pages A, B and C. The crawler examines the candidate action of fetching web page D. We also present the state-action representation of the current state and the candidate action. In the state representation, the parent node reward is zero (C is not relevant), the distance of last relevant node (A) is 2 and the relevance ratio of the web path (seed, A, C) is $2/3$ (the relevant nodes in the web path are the seed and A). In the action representation, no keywords are found in either the anchor text or the URL text of D and this accounts for the first two zero entries. Moreover, the estimation of relevance produced by the classifier is $0.3$. The web domain features are $3/4$ and $1$. The former is due to the fact that in this domain (en.wikipedia.org) the crawler has identified 3 relevant URLs and $1$ not relevant URL. The latter, which equals $1$, implies that this domain (en.wikipedia.org) has been fetched at least once. Otherwise, the second web domain feature would have been $0.5$.
    
    \subsubsection{Training with Reinforcement Learning}
    In our setting, the focused crawler is an RL agent operating in the MDP described in Section \ref{sec:MDPFormulation}. Here, we utilize the Double Deep Q-Network (DDQN) \cite{10.5555/3016100.3016191} agent, given its training stability and the recent success of deep Q-learning in web crawling \cite{wang2019toward, avrachenkov2021deep}; though other model-free RL agents that estimate Q-values on a state-action input could be also used in its place. Let $\hat{Q}$ be the neural network estimation of the state-action value function of the agent's policy. Specifically, let $\pmb{\theta}$ be the parameters of the online Q-Network and $\pmb{\theta}^{-}$ the parameters of the target Q-Network of DDQN. Then, given a record $(\pmb{x}(s_t,a_t), r_{t+1}, s_{t+1})$ from Experience Replay $B$ \cite{Mnih2015HumanlevelCT}, that is utilized in Deep Q-learning settings for sampling past state transitions, the DDQN target \cite{10.5555/3016100.3016191} can be written as
    
    \begin{equation}
        y_t = r_{t+1} + \gamma \hat{Q}_{\pmb{\theta}^{-}}\Big(\pmb{{x}}\big(s_{t+1},\argmax_{a' \in \tilde{A}_{t+1}} \hat{Q}_{\pmb{\theta}}(\pmb{{x}}(s_{t+1}, a'))\big)\Big) \nonumber
    \end{equation}   
    
    \noindent
    where $\tilde{A}_{t+1}$ is the union of the set of actions extracted at time step $t+1$ and the action returned by the tree frontier at time step $t+1$. More formally, we can compute $\tilde{A}_{t+1}$ as follows: 
    
    $$ \tilde{A}_{t+1} = \{ (u^{(t+1)}, u') : u' = Outlink(u^{(t+1)}) \} \cup \{\text{sampling}(F_{t+1},\\D_F(t+1))\} $$ 

    \noindent
    where $\text{sampling}(F_{t+1},D_F(t+1))$ is described in Algorithm \ref{alg:sampling} and  $u^{(t+1)}$ denotes the web page fetched at time step $t+1$. $\tilde{A}_{t+1}$ is used as an approximation of ${A}_{t+1}$, the full set of available actions at time step $t+1$, because the latter is intractably large for exhaustive search. As we will see in Section 2.6, $\text{Tree-Frontier sampling}$ is designed to approximate $\argmax_{a \in {A}_{t+1}} \hat{Q}_{\pmb{\theta}}(\pmb{{x}}(s_{t+1}, a'))$, an operator appearing both in the training of (standard) DDQN and the policy execution. We also evaluate the DDQN target based on the actions of the immediate outlink URLs of the currently fetched web page to improve the approximation of $\argmax$. This set of actions is a good candidate set for the $\argmax$, assuming that the currently fetched web page is relevant (which should happen with high probability if the crawler works well) and may probably lead to other relevant web pages. We note that although the latter candidate set may lack informativeness, in \cite{10.1007/978-3-319-91662-0_20} it was the only one used to estimate the $\argmax$ (in the asynchronous method).  
    
    Training starts from initial state $s_{-1}$, which represents the empty graph. Then, we initialize $B$ leveraging the experience from seeds $U_S$, using zero state features in $\pmb{x}(s_{-1},a)$ and positive rewards. Note that for each seed URL the action features in $\pmb{x}(s_{-1},a)$ are related to selecting this URL as the next action. This initialisation can speed up training, in cases where a large number of seeds are given as input and/or positive rewards are sparse in agent's exploration time. Next, we minimize $ {\mathbb{E}}_{U(B)}\left[y_t - \hat{Q}_{\pmb{\theta}}\left(\pmb{{x}}(s_t, a_t)\right) \right]^2$, with respect to $\pmb{\theta}$ by performing gradient descent steps on mini-batches of size $B$. We note that the function approximators used to produce $\hat{Q}$ for both the online and target Q-Networks are standard Multilayer Perceptrons (MLPs) with two hidden layers. 
    
    \subsection{Synchronous Frontier Method}
    The only part that we do not have described yet is  which action $a \in {A}_t$ is selected at a given time step $t$. In a common Deep Q-learning setting, an $\epsilon$-greedy policy is often used \cite{Mnih2015HumanlevelCT, 10.5555/3016100.3016191, kontogiannis2023xdqn}, which results in calculating the estimated Q-values of all actions of a given state to find $\argmax_{a \in {A}_{t}} \hat{Q}_{\pmb{\theta}}(\pmb{{x}}(s_{t}, a))$. 
    
    In the focused crawling setting, it is common to implement the frontier as a priority queue, with the priority values being the estimated Q-values of \textit{all} actions. In that case, for each time step, updating $\pmb{\theta}$ uses the \textit{synchronous method} \cite{10.1007/978-3-319-91662-0_20}. Let $P_{DDQN}$ be the time complexity of a single DDQN prediction and $d(t)$ be the number of new URLs inserted in the frontier at time step $t$. In the following theorem, we present the time complexity of the synchronous method (we note that the theoretical computational cost of the method was not discussed in the original work \cite{10.1007/978-3-319-91662-0_20}).
    % = \left|u: u \in Outlink(u^{(t)}), u \not\in C_t\right|$ 

    \begin{theorem}[Time Complexity of the Synchronous Method \cite{10.1007/978-3-319-91662-0_20}]\label{theorem:frontierSize}
        Assuming that for each time step the frontier size grows at least by $e_d$ and at most by $D$, let $S$ be the number of seed URLs. Then the synchronous method \cite{10.1007/978-3-319-91662-0_20} has an overall time complexity of at least $P_{DDQN} [(S e_d + \frac{e_d}{2}) T + \frac{e_d}{2} T^2)]$, and of at most $P_{DDQN} [ (S D + \frac{D}{2}) T + \frac{D}{2} T^2]$.
    \end{theorem}

    \begin{proof}
    Let $F_0$ be the frontier instance because of seeds. The time complexity of the frontier prediction through a synchronous update for the whole crawling process is equal to $P_{DDQN} \cdot F$, where $F = \sum\limits_{t=0}^{T}|F_{t}|$, $|F_t|$ is the frontier size at time step = $t$ and $P_{DDQN}$ is the time complexity of a single DDQN state-action value function prediction. Also, let $|F_0|$ be the frontier size because of the seed outlink URLs. Assumi- ng that for each time step the frontier size grows at least by $e_d$, we can lower-bound $|F_t|$:
    
    \begin{align}
        |F_t| & = |F_0| + \sum\limits_{\tau=1}^{t}{d(\tau)} \geq S e_d + \sum\limits_{\tau=1}^{t}{e_{d}}
        = S e_d + t e_{d} \nonumber 
    \end{align}
    
    \noindent
    Thus, we can lower-bound $F$:
        \begin{align}
        F = \sum\limits_{t=0}^{T}|F_{t}| \geq \sum\limits_{t=1}^{T} (|F_0| + t e_{d}) \geq (S e_d + \frac{e_{d}}{2}) T + \frac{e_{d}}{2} T^2 \nonumber
        \end{align}
    
    \noindent 
    Therefore, the time complexity of the frontier prediction (through synchronous update) for the whole crawling process is at least $P_{DDQN} ((s e_d + \frac{e_{d}}{2}) T + \frac{e_{d}}{2} T^2)$. Similarly, assuming that $d(t) \leq D$, where $D$ is the maximal number of outlink URLs from any web page on the Web (and thus practically $D$ and $T$ can be within the same order of magnitude), we can upper-bound $F$, as follows:
    $$F \leq (S D + \frac{D}{2}) T + \frac{D}{2} T^2$$

    \end{proof}

    \noindent
    Theorem \ref{theorem:frontierSize} indicates that the synchronous method is costly and, thus, impractical, due to the dependence on arbitrarily large $e_d$ and $D$. In practice, focused crawling has to deal with very large frontier sizes, and thus it is impossible to use the synchronous method for training and running an RL focused crawler. In our experiments (see Figure \ref{fig:frontier}), we will discuss the real-time huge growth of frontier that makes the synchronous update very difficult to run in practice. 
    
    \subsection{Updating and Sampling through Tree-Frontier}
    To reduce the time complexity of the  synchronous (brute force) method, we introdu- ce \textit{Tree-Frontier}, a two-fold variation of the CART decision tree algorithm, which we use to approximate the operator $\argmax_{a \in {A}_{t}} \hat{Q}_{\pmb{\theta}}(\pmb{{x}}(s_{t}, a))$,  appearing in the training of DDQN and the policy execution. In Tree-Frontier, frontier $F_t$ has now a decision tree representation, where each frontier element is assigned to a leaf. Decision trees have been used in RL to find a discretization of a large state space by recursively growing a state tree \cite{10.5555/295240.295802}. The set of tree leaves forms a partition of the initial state space. By splitting tree nodes, narrower convex regions are created, in which the agent behaves in a predictable way. In our case, each partition corresponds to a group of frontier samples of the form $\pmb{x}(s,a)$.  
    
    To achieve a meaningful partition, we borrow concepts from Explainable RL (XRL), where decision trees are often used to provide interpretable policy approximations \cite{Wu2018BeyondST, Liu2018TowardID, Bewley_Lawry_2021, vouros2022explainable, kontogiannis2023inherently}. Most importantly, Bewley and Lawry \cite{Bewley_Lawry_2021} proposed a binary decision tree providing MDP state abstraction through a linear combination of three impurity measures:  the action selected,
    the expected sum of rewards, and the estimated state temporal dynamics.
    As a result, their method identifies convex regions of the state space that are as pure as possible, in terms of the above criteria. Furthermore, to address the problem of allocating memory for huge continuous state spaces, Jiang et al. \cite{Jiang2021AnER} utilized a decision tree for learning the environment dynamics which also played the role of the experience replay buffer \cite{Mnih2015HumanlevelCT} of an RL agent for generating simulated samples. 
    
    Inspired by the above recent works, we are interested in discretizing simultane- ously both the state and the action spaces, which are defined in potentially continu- ous vector spaces, in an online manner such that this discretization facilitates optimal focused crawling. We distinguish two different kinds of samples in the frontier. For a given time step $t$, let \textit{experience samples} be those $\pmb{x}$ representation vectors that were selected at previous time steps $\tau < t$, and thus we have received their respective rewards $r_{\tau+1}$. Also, let \textit{frontier samples} be the $\pmb{x}$ representation vectors belonging to frontier $F_t$. Combining the above, let:
    
    $$D_E(t) = \{\left(\pmb{x}(s_{0},a_{0i}), 1\right) \colon 1\leq i \leq |U_S|\} \cup \{\left(\pmb{x}(s_{{\tau}},a_{{\tau}}), r_{{\tau+1}}\right) \colon \tau < t\}$$ 
    
    \noindent
    be the set of experience samples (including seeds) at time step $t$. Similarly, let $D_F(t) = \{\pmb{x} \colon \pmb{x} \in F_t \}$ be the set of frontier samples at time step $t$ given $F_t$.

\begin{algorithm}[t]
    \caption{Tree-Frontier sampling at time step  $t$}\label{alg:sampling}
    \begin{algorithmic}[1]
        \State \textbf{Input:} Tree-Frontier $F_t$, $D_F(t)$
        \State \textbf{Output:} a state-action $\pmb{x}$
        \State Select a representative frontier sample $\pmb{x}_i$ from each leaf through uniform sampling and remove it from the tree
        \If{\textit{exploration mode}:}  $\pmb{x} =$ $\text{UniformSampling}(\pmb{x}_i)$
        \Else: \,$\pmb{x} = \argmax_i \hat{Q}_{\pmb{\theta}}(\pmb{x}_i)$
        \EndIf
    \end{algorithmic}
    \end{algorithm}

    \begin{algorithm}[t]
    \caption{Tree-Frontier update at time step  $t$}\label{alg:2}
    \begin{algorithmic}[1]
        \State \textbf{Input:} Tree-Frontier $F_t$, $D_E(t)$, $D_F(t)$, new experience sample $e_{new} = \left(\pmb{x}(s_{t},a_{t}), r_{{t+1}}\right)$, new frontier samples $f_{new}$ 
        \State \textbf{Output:} a state-action $\pmb{x}$, $D_E(t+1)$, $D_F(t+1)$ and $F_{t+1}$

        \State Update $D_E(t+1) = D_E(t) \cup e_{new}$
        \State Update $F_{t+1}$ from $F_t$: check for split on leaf $P$, that contains $e_{new}$, using equations (\ref{split}) and (\ref{vr})
        \State Update $D_F(t+1) = D_F(t) \cup f_{new}$ by inserting each sample in $f_{new}$ to a leaf in $F_{t+1}$ following the tree rules
        \State$\pmb{x} =$ Tree-Frontier sampling($F_{t+1}$, $D_F(t+1)$) ; Algorithm \ref{alg:sampling}
    \end{algorithmic}
    \end{algorithm}    

    \begin{figure}[t]
        \centering
        \includegraphics[scale=0.6]{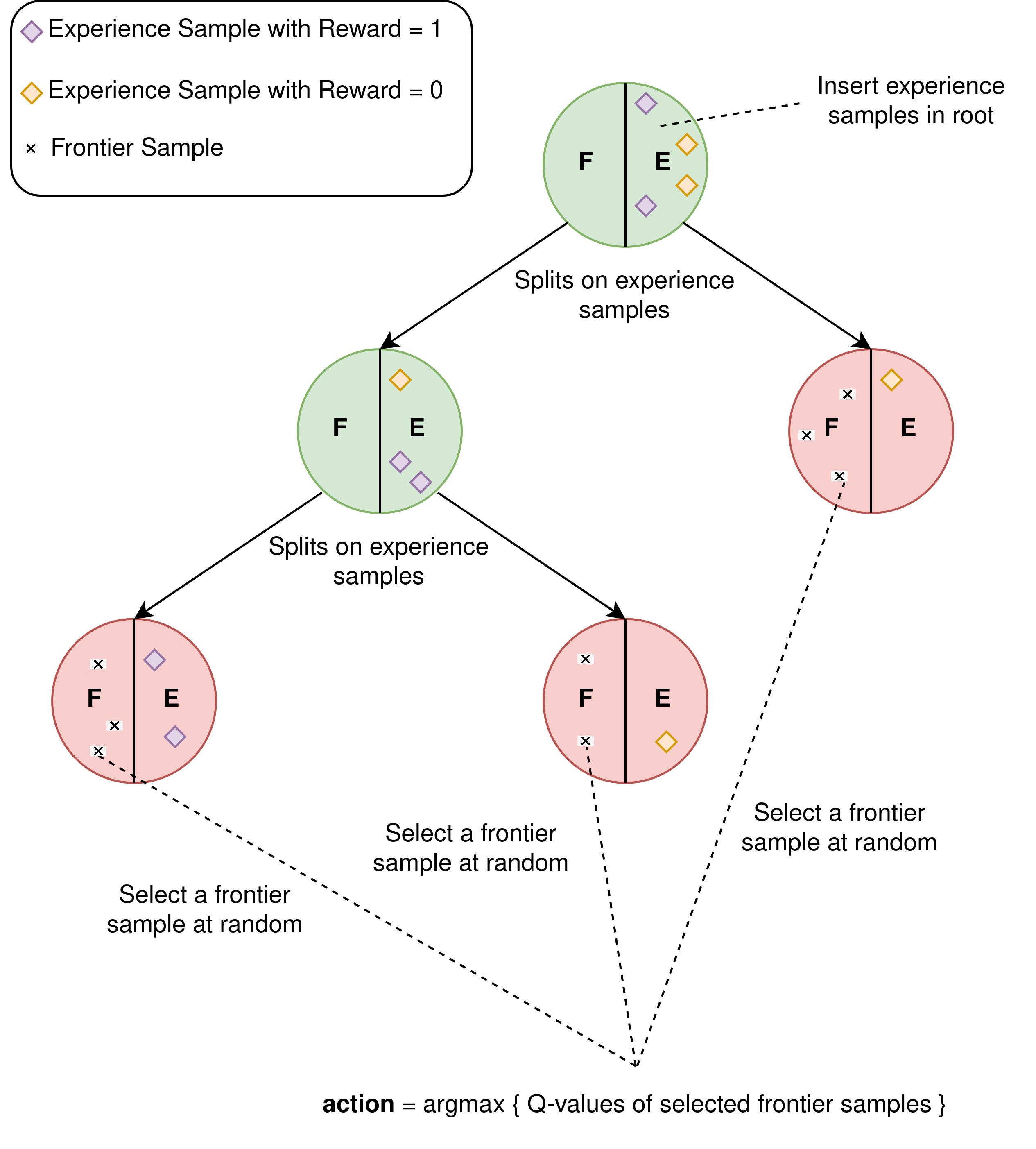}
        \caption{An example of Tree-Frontier sampling}
        \label{fig:tree_front}
    \end{figure}

    We propose a novel online binary decision tree algorithm, called Tree-Frontier, in order to efficiently represent and manipulate the frontier. As mentioned earlier, Tree-Frontier is two-fold; for a given time step $t$, based on the splitting rules, a tree node stores both a subset of $D_E(t)$ and a subset of $D_F(t)$. These sets are effectively the labeled (``training'') and unlabeled (``prediction'') ones. Tree-Frontier uses the experien- ce samples, $D_E(t)$, in order to split nodes, utilizing their rewards as target labels. For a given frontier sample (a state-action vector), Tree-Frontier predicts the agent's reward. Let $I_P$ be the subset of experience samples of $D_E(t)$ that belong to a tree leaf $P$. To split this leaf (parent), we seek binary partitions $I_P = \{I_L, I_R\}$, such that for some state-action feature $f$ and numerical threshold $c \in \mathbb{R}$:  

    \begin{equation} \label{split}
        (\pmb{x}^{(f)} < c : \forall \pmb{x} \in I_L) \cap (\pmb{x}^{(f)} \geq c : \forall \pmb{x} \in I_R)
    \end{equation}
    
    \noindent
    where $\pmb{x}^{(f)}$ is the f-th element (feature) of the state-action representation vector $x$. Let $V(P)$ be the reward sample variance of node $P$. Borrowing from XRL \cite{Liu2018TowardID} and unlike CART, we use as splitting criterion the weighted sample variance reduction $VR(P,f,c)$ of the rewards in $P$. That is:

    \begin{equation} \label{vr}
        VR(P,f,c) = V(P) - \frac{|L|}{|P|}V(L) - \frac{|R|}{|P|}V(R)
    \end{equation}    
    
    \noindent
    For each new experience sample in leaf $P$, Tree-Frontier checks for candidate partitions only in $P$. If no partition with positive $VR(P,.,.)$ exists, no split is made. Otherwise, we select the partition of $P$ that causes the highest reduction. In this sense, Tree-Frontier follows an online best-first strategy, unlike CART that follows a depth-first growth strategy.
    
    To partition the frontier samples of the splitting leaf $P$ into the two new leaves derived from $P$, we simply follow the same splitting tree rules with the experience samples. Similarly, we follow these rules to insert the new frontier samples in Tree-Frontier. To sum up, we utilize the experience samples to create splitting rules and we follow these rules to partition the frontier samples.

    Therefore, at a given time step $t$ instead of calculating the estimated Q-values of all frontier samples, we select one representative from each leaf through uniform sampling (Algorithm \ref{alg:sampling}). Then, we calculate the estimated Q-values of only those representatives and select the action with the highest Q-value. We call this procedure \textit{Tree-Frontier sampling}. We present the Tree-Frontier update algorithm for a given time step $t$ in Algorithm \ref{alg:2}. Moreover, the full TRES algorithm is presented in Appendix B. In Figure 3, we demonstrate an example of Tree-Frontier sampling. The three red nodes correspond to leaves. From each leaf, we perform uniform sampling on frontier samples (F) and then select the sample with the highest estimated Q-value ($\argmax$ operation).

    Intuitively, the use of variance reduction as the splitting criterion of Tree-Frontier intends to create leaves with small variance of rewards (which are collected from the experience samples). Since the distribution of the experience samples is the same as the distribution of the frontier samples (simply because an experience sample had been a frontier sample before it got selected by the crawler for visit), we expect that a leaf containing many experience samples and has small variance of their rewards is very probable to contain frontier samples with similar rewards that have also small variance. Therefore, although the reward of the frontier samples are not observed by the crawler unless it fetches their corresponding web pages, the RL crawler could be able to identify the leaves with frontier samples of good Q-values; i.e. the leaves from which on average it samples relevant URLs to fetch.

    \begin{theorem}[Time Complexity of Tree-Frontier sampling]\label{theorem:TreeFrontierComplexity}
        Tree-Frontier sampling has an overall time complexity of at most $P_{DDQN} (1 + \frac{3}{2}T + \frac{1}{2}T^2)$.
    \end{theorem}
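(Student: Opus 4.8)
The plan is to charge the cost of the whole $T$-step crawl against the number of DDQN forward passes it performs, each of cost $P_{DDQN}$, and to show this number is at most $1+\frac32 T+\frac12 T^2$. The only place a Tree-Frontier update spends non-elementary work is these passes; every other operation it performs is an elementary tree operation that we take to be dominated by one forward pass, which is the same convention under which Theorem \ref{theorem:frontierSize} is stated.

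First I would localize the DDQN calls inside Algorithm \ref{alg:2}. In exploration mode the update makes no prediction at all, so the worst case is exploitation mode, where line 6 evaluates $\hat Q_{\pmb{\theta}}(\pmb x_i)$ exactly once for each leaf of the current tree and then takes an argmax over these values. Hence, if $\ell_t$ denotes the number of leaves of Tree-Frontier at timestep $t$, the update at $t$ costs at most $\ell_t\,P_{DDQN}$, to which I would add the $O(1)$ forward passes consumed by the DDQN training step (computing the target $y_t$) that precedes the update in Algorithm \ref{alg:3}.

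Second, I would bound $\ell_t$. Tree-Frontier starts as a single root leaf: the seed-derived experience samples in $D_E(0)$ all carry reward $1$, so $V(P)=0$ and the criterion (\ref{vr}) triggers no split at initialization. At each timestep line 2 of Algorithm \ref{alg:2} tests for a split on the single leaf $P$ that received the new experience sample, and splitting one leaf of a binary tree adds exactly one leaf; therefore at most one leaf is created per timestep, giving $\ell_t\le t+2$ (the root, plus at most one new leaf for each timestep up to and including $t$, counted generously). Summing, the total number of forward passes over $t=0,\dots,T-1$ is at most $\sum_{t=0}^{T-1}(\ell_t+O(1))\le\sum_{t=0}^{T-1}(t+2)+O(T)=\frac{T^2+3T}{2}+O(T)$, and tracking the constants carefully collapses the additive slack to the stated form $1+\frac32 T+\frac12 T^2$; multiplying by $P_{DDQN}$ yields the claim.

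Finally I would verify that the remaining per-timestep work is indeed dominated by a single forward pass: inserting the (constantly many, per outlink) new frontier samples takes $O(\mathrm{depth})$ comparisons, evaluating $VR(P,\cdot,\cdot)$ in (\ref{vr}) scans the $O(t)$ experience samples of $P$ over the constantly many features, and drawing one uniform representative per leaf costs $O(\ell_t)$; each of these is at most a constant multiple of $P_{DDQN}$ per timestep and hence is absorbed into the bound. The step I expect to be most delicate is the second one: pinning down the leaf-count bound and the resulting additive constants exactly, since the theorem's $1+\frac32 T$ lower-order terms are sensitive to how the root, the no-op first update with empty $e_{new}$, and the training-step passes are counted; a secondary subtlety is making precise, as in Theorem \ref{theorem:frontierSize}, the convention that a neural-network forward pass dominates the elementary tree bookkeeping.
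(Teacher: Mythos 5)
Your proposal is correct and follows essentially the same route as the paper: only the leaf that receives the new experience sample can be split at a timestep, so at most one leaf is added per step, the leaf count at timestep $t$ is bounded linearly in $t$, and summing over timesteps and multiplying by $P_{DDQN}$ gives the quadratic bound. The only difference is bookkeeping: the paper counts solely the per-leaf Q-value predictions (no training-step passes) and uses $F_{tree}^{0}=1$ with $F_{tree}^{t}\le 1+t$, summed over $t=0,\dots,T$, which lands exactly on $1+\tfrac{3}{2}T+\tfrac{1}{2}T^{2}$, whereas your looser $\ell_t\le t+2$ plus the $O(T)$ slack would overshoot the stated constant and needs the tightening you only gesture at.
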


\begin{proof}
The time complexity of the frontier prediction through tree-frontier update for the whole crawling process is equal to $P_{DDQN} \cdot F_{tree}$, where $F_{tree} = \sum\limits_{t=0}^{T}F_{tree}^{t}$ and $F_{tree}^{t}$ represents the number of leaves at time step $t$. Also, let $F_{tree}^0$ represent the number of leaves at time step $0$; i.e. the leaves because of splits of seed experience samples of unity rewards.  Thus, $F_{tree}^0 = 1$.  Let $d_{leaves}(t) = F_{tree}^{t} - F_{tree}^{t-1}$ be the increase of the number of tree leaves from time step $t-1$ to time step $t$. Observing that at a time step $t$ only the tree leaf that contains the experience sample of $u^{(t)}$ (the URL visited at $t$) can be split (if a significant difference based on sample variance reduction of its rewards is noticed), $d_{leaves}(t) = \mathbbm{1}{\{\textit{a split happened at t}\}} \leq 1$. Then, we can upper-bound $F_{tree}^{t}$:

            \begin{align}
            F_{tree}^{t} & = F_{tree}^0 + \sum\limits_{\tau=1}^{t}{d_{leaves}(\tau)} \leq 1 + t \nonumber
            \end{align}

\noindent
This way, we can upper-bound $F_{tree}$:

\begin{align}
    F_{tree} = \sum\limits_{t=0}^{T}F_{tree}^{t} \leq 1 + \sum\limits_{t=1}^{T} (1 + t) = 1 + \frac{3}{2}T + \frac{1}{2}T^2  \nonumber
\end{align}

\end{proof}

    \noindent
    Theorem \ref{theorem:TreeFrontierComplexity} implies that Tree-Frontier sampling has a better time complexity than the synchronous method. The fact that Tree-Frontier sampling achieves to drop the factors of $F_0$ and $\epsilon_d$ in the overall time complexity makes the RL crawling process very efficient in practice (as we also validate in our experiments next), as opposed to the computational issue of the synchronous method. 
    
    % We present the mathematical analysis of TRES in Appendix B. The full focused crawling procedure of the proposed TRES framework is presented in Algorithm \ref{alg:3} (see Appendix).

\subsection{Further Mathematical Analysis}

\subsubsection{{Preliminaries of the Mathematical Analysis}}

Let $\mathcal{X}$ be the state-action space, i.e. the space corresponding to the URL representa-tion given its path. Let $x_t^* \in \mathcal{X}$ be the state-action vector corresponding to the URL with the best Q-value in tree-frontier, according to a policy approximator function (e.g. a neural network with parameters ${\pmb{\theta}}$) at time step t. Let $n_t$ be the number of tree-frontier leaves at time step t. Let $p_{t}(x_t^*) = \frac{1}{|L_t^*|}$ be the probability that at time step t the best $x_t^*$ (according to $\hat{Q}_{\pmb{\theta}}^{(t)}$) is selected, where $L_t^*$ is the leaf containing $x_t^*$. We denote by $h(t)$ the height of the leaf containing $x^*_t$ at time step $t$. For brevity, we write the state-action value function of the agent's policy at time step $t$, produced by the function approximator, $\hat{Q}_{\pmb{\theta}}^{(t)}$ = $\hat{Q}_t$. Also, let $d_{l_i}^t$ be the distribution of i-th leaf over frontier samples at time step t.

In each node of the decision tree, let $R$ be the random variable of the reward and $\boldsymbol{X}$ the random vector corresponding to the representation of the node elements. The splitting rule is a Bernoulli variable of the form $S=\mathbbm{1}\{\boldsymbol{X}[i]>b\}$, for some feature $i$ and some threshold $b$. In particular, the chosen rule is that which maximises the empirical variance reduction calculated on the samples. Variance is calculated with respect to the reward variable $R$. Let $V^t_R(Parent)$ be the reward variance of a parent node on time step t. The parent node is split into two child nodes. Let $V^t_R(Child)$ be the reward variance of some child node on time step $t$, such that $V^t_R(Parent)=Var(R)$ and $V^t_R(Child)=Var(R|S)$.

We define $J = \sum \limits_{t=1}^{T} \mathop{\mathbb{E}}_{x_t \sim d_t}\left[\hat{Q}_t(x_t)\right]$ as a utility function, where $d_t = \frac{1}{n_t}\sum\limits_{i=1}^{n_t}d_{l_i}^t$, i.e. the state-action distribution at time step t. We note that $\hat{Q}_t$ corresponds to different policies for different values of $t$, considering a different version of the function approxi- mator for each time step. At time step $t$, we define $V_R^t(i)$ to be the variance of rewards of the experience samples of the i-th leaf and $V_q^t(i)$ to be the variance of Q-values of the frontier samples of the i-th leaf.

\subsubsection{{Objective}}
The utility function $J$ describes the total expected reward-to-go of utilizing $\hat{Q}_t$ for all time steps $t \in [0,T]$. We note that a high value of J implies selections of higher Q-values per time step, and thus policies closer to greedy ones. 
Considering that the policy approximator function $\hat{Q}_t$ is an \textit{oracle} at time step $t$, the greedy action selection is to choose the state-action pair from the tree-frontier which estimated Q-value at time step $t$ is the highest among the others available, that is $x^*_t$. Thus, oracle would always select $x^*_t$ from the tree-frontier at each time step $t$. We denote $J^* = \sum \limits_{t=1}^{T}\left[\hat{Q}_t(x^*_t)\right]$ as the oracle's utility function, i.e. the utility function in which always the best $x^*_t$ is selected at time step t.

We are interested in measuring how far from oracle  our agent is. That is, we are interested in finding a lower bound of $J - J^*$. Considering that there will always be a chance to select the best $x^*_t$ at each time step $t$, it holds that $J \leq J^*$, i.e. $J^*$ is a tight upper bound of $J$. We start by unrolling the closed form of utility function $J$. 

\begin{align}
    J &= \sum\limits_{t=1}^{T}\left[p_{t}(x_t^*)\hat{Q}_t(x^*_t) + (1-p_{t}(x_t^*)) \max_{i \in n_t}\left\{\mathop{\mathbb{E}_{x_t \sim d^t_{l_i}}} [\hat{Q}_t(x_t) \mathbbm{1}\{x^*_t \neq x_t\}] \right\} \right] \nonumber \\
    &= \sum\limits_{t=1}^{T}\left[\frac{1}{|L_t^*|}\hat{Q}_t(x^*_t) + \left(1 - \frac{1}{|L_t^*|}\hat{q}_t\right) \right] \nonumber \\
    &= J^* - \sum\limits_{t=1}^{T}\left[\left(1 - \frac{1}{|L_t^*|}\right) (\hat{Q}_t(x^*_t) - \hat{q}_t) \right] 
\end{align}

\noindent
In the equations above, we have defined 

$$\hat{q}_t = \max_{i \in n_t}\left\{\mathop{\mathbb{E}_{x_t \sim d^t_{l_i}}} [\hat{Q}_t(x_t) \mathbbm{1}\{x^*_t \neq x_t\}] \right\}$$ 

\noindent
Also, recall that the size of $L_t^*$ is bounded by the maximum number of frontier samples at time step $t$, and thus $L_t^* \leq 1 + Dt$, where $D$ is the maximum number of new frontier samples (outlinks minus deleted frontier samples) at a single time step. 

\subsubsection{Analysis}

Our goal is to find a lower bound for $J - J^*$. To achieve this we need to make some reasonable assumptions, as described below. The assumptions we make focus on some key steps of the learning algorithm and supposing these steps are successful with high probability we show that the proposed algorithm approaches the optimal performance of the oracle. These assumptions may be admittedly strong (but reasonable for focused crawling), so the analysis we make is more about explaining the logic of the algorithm and interpreting the good experimental performance.

The first assumption is about the effectiveness of variance reduction. We assume that the splitting rule uses a feature that is adequately correlated to the reward of the elements in the node being split. This assumption is practically about the quality of our data. If the web page features are weakly correlated to the reward, we can not expect any learning algorithm to perform well.   

\begin{assumption}\label{as:1}
  The  squared Pearson product-moment correlation coefficient of reward R and splitting rule S is greater than $1-\frac{\epsilon}{T\cdot h_{max}}$, where $h_{max}$ is the height limit in the tree and $T$ the number of time steps. 
\end{assumption}

\noindent
In our case, we expect that most state-action features present strong correlation to the reward. For example, we expect that actions with high relevance predictions, or with keywords in their anchor texts, are likely to be relevant to the target topic.

\begin{Proposition}
     If Assumption 1 holds, then with probability at least $1-\epsilon$ it holds that $\frac{V^t_R(Child_i)}{V^t_R(Parent_i)}<1$ for all pairs of nodes $V^t_R(Child_i),V^t_R(Parent_i)$ that lie along the path from the root to leaf $L_t^*$ for all t in $[T]$.
\end{Proposition}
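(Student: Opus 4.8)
The plan is to make precise the three-step argument already sketched in the paragraph preceding the statement.

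\textbf{Step 1 (one split).} Fix a timestep $t$ and an internal node on the path from the root to $L_t^*$. Let $R$ be the reward random variable over the samples in that node and let $S=\mathbbm{1}\{X[i]>b\}$ be the Bernoulli indicator of the splitting rule selected by the criterion in Equation~(\ref{vr}). Since the weighted variance reduction in (\ref{vr}) is exactly the empirical form of the law of total variance, we have $V^t_R(Parent)=\mathrm{Var}(R)$ and the weighted post-split variance equals $\mathbb{E}[\mathrm{Var}(R\mid S)]$; moreover, because $S$ is binary, $\mathbb{E}[R\mid S]$ is affine in $S$, so $\mathbb{E}[\mathrm{Var}(R\mid S)]=\mathrm{Var}(R)-\mathrm{Cov}^2(R,S)/\mathrm{Var}(S)$ (with $\le$ in general, by optimality of the conditional expectation over linear predictors).

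\textbf{Step 2 (normalize and apply Markov).} Dividing by $\mathrm{Var}(R)$ --- the degenerate case $\mathrm{Var}(R)=0$ being handled separately, since such a node is already pure, is not split, and contributes no parent--child pair --- gives $\mathbb{E}\!\left[\mathrm{Var}(R\mid S)/\mathrm{Var}(R)\right]=1-\rho^2(R,S)$, where $\rho^2(R,S)$ is the squared Pearson correlation coefficient. Assumption~\ref{as:1}, read with $\delta:=\epsilon/(T\,h_{\max})$, gives $\rho^2(R,S)\ge 1-\delta$, so this expectation is at most $\delta$. Viewing $\mathrm{Var}(R\mid S)/\mathrm{Var}(R)$ as a nonnegative random variable over the value of $S$ (which child a node element descends into), Markov's inequality at level $a=1$ yields $\mathbb{P}\!\left[\mathrm{Var}(R\mid S)/\mathrm{Var}(R)\ge 1\right]\le\delta$. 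Hence the parent--child variance ratio at this split is $<1$ with probability at least $1-\delta$.

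\textbf{Step 3 (union bound over the path and over time).} The path from the root to $L_t^*$ has at most $h_{\max}$ edges, hence at most $h_{\max}$ parent--child pairs, and there are $T$ timesteps, so at most $T\,h_{\max}$ pairs enter the statement. A union bound over these pairs bounds the probability that some pair has variance ratio $\ge 1$ by $T\,h_{\max}\,\delta=\epsilon$. Therefore, with probability greater than $1-\epsilon$, every ratio $V^t_R(Child_i)/V^t_R(Parent_i)$ along these paths is strictly below $1$, which is the claim.

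\textbf{Main obstacle.} The delicate part is not the arithmetic but keeping track of the probability space at each step. The splitting rule $S$ is chosen data-dependently (it maximizes the empirical variance reduction), so Assumption~\ref{as:1} must be interpreted as a statement about the \emph{selected} rule; and the Markov step only controls the probability-weighted \emph{average} of the two children's variances, so one must argue that the child that actually lies on the path to $L_t^*$ is the one the descent lands in with the complementary ($\ge 1-\delta$) mass --- otherwise the count of $T\,h_{\max}$ ``good'' events is not literally justified and the conclusion would have to be weakened to a statement about the variance along the descent toward $L_t^*$. I would fix this reading explicitly at the outset so that the union bound in Step 3 applies verbatim.
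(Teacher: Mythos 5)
Your proposal is correct and follows essentially the same route as the paper's own argument: bound $\mathbb{E}[\mathrm{Var}(R\mid S)/\mathrm{Var}(R)]$ by $1-\rho^2(R,S)\le\delta$ via the covariance inequality for conditional variance, apply Markov's inequality at level $a=1$, and take a union bound over at most $T\,h_{\max}$ parent--child pairs with $\delta=\epsilon/(T\,h_{\max})$. Your Step~1 equality for binary $S$ and the probability-space caveat in your closing remark are refinements the paper glosses over, but they do not change the argument.
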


The second assumption, which we denote by the \textit{Split Q-Reward Assumption}, is about the relation between the reward variance ratio and the Q-value variance ratio. Rewards are supposed to reflect real Q-values so we assume that the noise between the reward variance ratio and the Q-value variance ratio has small expectation. 

\begin{assumption}[Split Q-Reward Assumption] \label{as:2}
  In each leaf split $s$, let $Child_i$ be a new leaf created because of $s$ and $Parent_i$ be the split node because of $s$. We consider the random noise between the reward variance ratio and the Q-value variance ratio $e_i=\frac{V^t_q(Child_i)}{V^t_q(Parent_i)} -  \frac{V^t_R(Child_i)}{V^t_R(Parent_i)} $. Then, $\mathbb{E}[e_i]<\lambda\cdot exp\left(-\frac{w\sqrt{\log(T/\epsilon)}}{\sqrt{2h(t)}}\right)-1$, for each $i \in [1,h(t)]$, where $w$ is the range of $\log(1+ e_i)$ and $\lambda$ is some constant less than one. Moreover, we assume that $e_i$ are i.i.d. random variables.
\end{assumption}

\noindent
The above assumption can be well-handled in practice by manipulating appropriately the value of the discount factor $\gamma$. By setting $\gamma$ to have a relatively low value, we bring the Q-values closer to the immediate rewards. In the focused crawling setting, such assignment of $\gamma$ is quite reasonable, since we expect the agent to be myopic towards in being more concerned with maximizing immediate rewards (thus explicitly optimizing the harvest rate as pointed out in \cite{axiotis2021personalized}) and with only a few future rewards, at the time steps of which the agent could possibly identify a future hub or a better path. For example, by setting $\gamma$ equal to $0.3$, the agent is very concerned with the immediate reward, and approximately only little concerned with the next two future rewards for identifying better paths.

\begin{Proposition}
 Given Assumptions 1 and 2 it holds that $V_q(L_t^*)<\lambda^{h(t)} {V_q}^t(Root)$ for all $t$ in $[1,2,...,T]$ with probability at least $1-2\epsilon$, where $h(t)$ is the height of leaf $L_t^*$ containing $x^*_t$.
\end{Proposition}

Our third assumption is that the leaf containing $x^*_t$ has a splitting frequency greater or equal to some value $1/M_s$. In this way the height gradually increases and the Q-value variance is reduced. The first assumption guarantees that in each split effective variance reduction is achieved. The third assumption implies that this reduction takes place multiple times so that the overall reduction is high enough. The second assumption translates reward variance reduction to Q-value variance reduction. This implies that the Q-value variance in the leaf containing $x^*_t$ is not very high, so random sampling from this leaf could not select an element with a Q-value far from the optimal. Thus, the cumulative utility is bounded with high probability.

\edef\oldassumption{\the\numexpr\value{assumption}+1}
\setcounter{assumption}{0}
\renewcommand{\theassumption}{\oldassumption.\alph{assumption}}
\begin{assumption}\label{as:3a}
    The nodes of the path from the root node to the optimal leaf, $L_t^*$, were former selected leaves for at least a fraction $1/M_s$ of the times until time step $t$. Moreover, each selection leads to a split.
\end{assumption}

\noindent This is quite a strong assumption. It demands that the optimal leaf at time step $t$ has height proportional to $t$. As the time grows, this allows the existence of a constant number of long branches that repeatedly give optimal leaves. Leaves that lie at small heights are expected to contain samples with small $R$ and $Q$ values that are not selected. Such leaves end up having many samples, as splits can only occur when a leaf is selected and splits are the only way to reduce samples in a leaf. This is a key property of TRES, as samples with small $R$ and $Q$ values are accumulated in few leaves and have a small probability of being selected. The second part of the assumption, i.e. the fact that each selection leads to a split, is a mild assumption that is made mainly for technical reasons. If a selection does not lead to a split, then variance inside the leaf is probably already very small, which is an advantageous scenario. So, in the context of variance reduction assuming that each selection leads to a split does not change the worst case analysis.

\edef\oldtheorem{\the\numexpr\value{theorem}+1}
\renewcommand{\thetheorem}{\oldtheorem.\alph{theorem}}
\setcounter{theorem}{0}
\begin{theorem}
      If Assumptions 1, 2 and 3.a hold, then: 
      $$\mathbb{P}\left\{J-J^* \geq  - \frac{\sqrt{T(D+1)}}{\sqrt{\epsilon\lambda}}\cdot \frac{\Bar{\lambda}(\Bar{\lambda} + 1)( \Bar{\lambda}^T-1)}{(\Bar{\lambda} - 1)^3}
\right\} \geq 1-3\epsilon$$ 
\noindent 
where $T$ is the total number of time steps, $D$ is an upper bound for the number of new URLs inserted in frontier at each time step, parameter $\lambda$ is introduced in Assumption 2 and $\Bar{\lambda}=\sqrt[2M_s]{\lambda}$.
\end{theorem}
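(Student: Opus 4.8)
The starting point is the closed form derived in the Preliminaries, namely
\begin{equation*}
J^{*}-J \;=\; \sum_{t=1}^{T}\Bigl(1-\tfrac{1}{|L_t^{*}|}\Bigr)\bigl(\hat{Q}_t(x_t^{*})-\hat{q}_t\bigr)\;\le\;\sum_{t=1}^{T}\bigl(\hat{Q}_t(x_t^{*})-\hat{q}_t\bigr),
\end{equation*}
so it is enough to bound $\sum_{t=1}^{T}(\hat{Q}_t(x_t^{*})-\hat{q}_t)$ by the claimed quantity with probability at least $1-3\epsilon$. The whole argument is: control each summand by the in-leaf $Q$-value variance $V_q(L_t^{*})$, cash in the variance-decay Propositions~1--3, and then sum a geometric-type series.

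First I would bound a single gap. Restricting the outer maximum defining $\hat q_t$ to the leaf $L_t^{*}$ shows $\hat q_t$ is at least the mean of $\hat Q_t$ over $L_t^{*}$ (the indicator $\mathbbm{1}\{x^{*}_t\neq x_t\}$ only deletes one point), while $x_t^{*}$ attains the maximum of $\hat Q_t$ on its own leaf; hence the elementary inequality ``$\max$ minus mean $\le \sqrt{n}\cdot\mathrm{std}$'' for $n$ reals gives, up to a universal constant we suppress, $\hat Q_t(x_t^{*})-\hat q_t \le \sqrt{|L_t^{*}|\,V_q(L_t^{*})}$. Since $|L_t^{*}| \le 1+Dt \le (1+D)\,t$, this reads $\hat Q_t(x_t^{*})-\hat q_t \le \sqrt{(1+D)\,t}\,\sqrt{V_q(L_t^{*})}$.

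Next I would inject the variance decay. Proposition~2 gives, with probability at least $1-2\epsilon$, that $V_q(L_t^{*}) < \lambda^{h(t)}\,V_q^{t}(\mathrm{Root})$ for all $t$; Proposition~3 gives $h(t)\ge \lfloor t/M_s\rfloor \ge t/M_s-1$, and since $\lambda<1$ and $\bar\lambda^{2}=\lambda^{1/M_s}$ this yields $\lambda^{h(t)}\le \lambda^{-1}\bar\lambda^{2t}$. On this event,
\begin{equation*}
\sum_{t=1}^{T}\bigl(\hat Q_t(x_t^{*})-\hat q_t\bigr)\;\le\;\sqrt{\tfrac{1+D}{\lambda}}\;\sum_{t=1}^{T}\sqrt{t}\;\bar\lambda^{\,t}\,\sqrt{V_q^{t}(\mathrm{Root})}.
\end{equation*}
To remove the root variances I would apply Markov's inequality to $\sum_{t=1}^{T}V_q^{t}(\mathrm{Root})$, whose expectation is $O(T)$ once $Q$-values are put on a bounded scale: with probability at least $1-\epsilon$ every $V_q^{t}(\mathrm{Root})$ is then at most $T/\epsilon$, which is exactly where the extra $\sqrt{T}$ and the $1/\sqrt{\epsilon}$ originate and where the third $\epsilon$ is spent. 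A union bound with the Proposition~2 event keeps the total failure probability at $3\epsilon$. It then remains to estimate the scalar series, using $\sqrt{t}\le t$:
\begin{equation*}
\sum_{t=1}^{T}\sqrt{t}\,\bar\lambda^{t}\;\le\;\sum_{t=1}^{T}t\,\bar\lambda^{t}\;\le\;\frac{\bar\lambda}{(1-\bar\lambda)^{2}}\;\le\;\frac{\bar\lambda(\bar\lambda+1)(\bar\lambda^{T}-1)}{(\bar\lambda-1)^{3}},
\end{equation*}
where the last inequality holds because $\bar\lambda^{\,T-1}(1+\bar\lambda)<2$ for $\bar\lambda\in(0,1)$. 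Collecting the constants reproduces the stated bound.

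The main obstacle is the treatment of $V_q^{t}(\mathrm{Root})$: it is the only quantity in the chain not pinned down by Assumptions~1--3, so making the Markov step rigorous requires a clean expectation bound on the root $Q$-variance together with a careful normalization of the value scale, and this is precisely where the $1/\sqrt{\epsilon}$ factor and the third $\epsilon$ of the $1-3\epsilon$ guarantee come from. By contrast, the per-step max-minus-mean inequality and the geometric-series estimate are routine, while Propositions~1--3 already encapsulate the delicate tree-growth and variance-reduction reasoning behind Assumptions~1--3.
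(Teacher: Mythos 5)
Your overall skeleton matches the paper's: start from the closed form $J=J^*-\sum_{t}\bigl(1-\tfrac{1}{|L_t^*|}\bigr)\bigl(\hat{Q}_t(x_t^*)-\hat{q}_t\bigr)$, control each gap by the in-leaf standard deviation $\sigma_{L_t^*}$ times $\sqrt{|L_t^*|}\le\sqrt{(D+1)t}$, feed in Propositions 1--3 to get $\sigma_{L_t^*}^2<\lambda^{h(t)}V_q^t(\mathrm{Root})$ with $h(t)\ge t/M_s-1$, and sum a series in $\bar\lambda$. Your series estimate $\sum_{t\le T}t\bar\lambda^t\le\frac{\bar\lambda}{(1-\bar\lambda)^2}\le\frac{\bar\lambda(\bar\lambda+1)(\bar\lambda^T-1)}{(\bar\lambda-1)^3}$ is correct and is even a bit cleaner than the paper's use of the exact finite-sum formula for $\sum t^2 a^t$.

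The genuine gap is your treatment of $V_q^t(\mathrm{Root})$. You invoke Markov's inequality on the premise that $\mathbb{E}\bigl[\sum_t V_q^t(\mathrm{Root})\bigr]=O(T)$ ``once $Q$-values are put on a bounded scale,'' but nothing in Assumptions 1, 2 or 3.a supplies any expectation bound on the root variance, and the paper's own normalization has $\hat{Q}_t$ ranging in $[0,t]$, not $[0,1]$, so an $O(1)$ per-step variance cannot be assumed. The paper needs no probabilistic step here at all: it bounds the root variance deterministically by Popoviciu's inequality, $V_q^t(\mathrm{Root})\le t^2/4$, which is what produces the $t^{3/2}$ (then $t^2$) inside the series. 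Relatedly, you misplace the source of the $\sqrt{T/\epsilon}$ factor and of the third $\epsilon$: in the paper they come from the per-step gap itself, which is bounded not by a deterministic max-minus-mean inequality but by Chebyshev plus a union bound over the $|L_t^*|\le Dt+1$ frontier samples of the leaf (with $k=\sqrt{|L_t^*|T/\epsilon}$), costing one failure probability $\epsilon$ over all timesteps on top of the $2\epsilon$ from Propositions 1--2. If you repaired your argument by replacing the Markov step with Popoviciu's bound, your deterministic Samuelson-type inequality would in fact yield the claimed bound (indeed with room to spare, since the stated bound carries the extra $\sqrt{T/\epsilon}$), modulo the suppressed constants and the small slack in treating $\hat{q}_t$ as the leaf mean with one point removed; but as written, the proposal rests on an unsupported expectation hypothesis rather than on the assumptions of the theorem.
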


In the above result the absolute error is proportional to the square root of time horizon $T$. It is also proportional to the square root of $D$, where $D$ is the maximum number of new frontier samples at each time step. Finally, there is a dependence on  $\frac{1}{\sqrt[2M_s]{\lambda}-1}$, where $\lambda$ is the decrease rate of the variance of Q-values along a path from the root to a leaf.

Now we can replace Assumption \ref{as:3a}, which is quite strict, with the following milder assumption. Assumption \ref{as:3b} allows the tree to have a balanced structure and not be dominated by certain branches.

\setcounter{assumption}{1}
\renewcommand{\theassumption}{\oldassumption.\alph{assumption}}
\begin{assumption}\label{as:3b}
    The height $h(t)$ of the optimal leaf $L_t^*$ is at least $\log_2(t)$.
\end{assumption}

\setcounter{theorem}{1}
\begin{theorem}\label{t3b}
      If Assumptions 1, 2 and \ref{as:3b} hold and parameter $\lambda$ introduced in Assumption \ref{as:2} is less than $\frac{1}{8}$, then: 
      $$\mathbb{P}\left\{J-J^* \geq  - \frac{T\sqrt{T(D+1)}}{\sqrt{\epsilon}} \right\} \geq 1-3\epsilon$$ 
      \noindent
      where $T$ is the total number of time steps, $D$ is an upper bound for the number of new URLs inserted in frontier at each time step and $\Bar{\lambda}=\sqrt[2M_s]{\lambda}$.
\end{theorem}

\setcounter{theorem}{2}
\begin{theorem}\label{t3c}
     If Assumptions 1, 2 and \ref{as:3b} hold and parameter 
     $\lambda$ introduced in Assumption \ref{as:2} is less than $\frac{1}{32}$, then: 
     $$\mathbb{P}\left\{J-J^* \geq  - \frac{\sqrt{T(D+1)}}{\sqrt{\epsilon}}\cdot(\ln T + 1) \right\} \geq 1-3\epsilon$$ 
     \noindent
     where $T$ is the total number of time steps, $D$ is an upper bound for the number of new URLs inserted in frontier at each time step and $\Bar{\lambda}=\sqrt[2M_s]{\lambda}$.
\end{theorem}

Theorems \ref{t3b} and \ref{t3c} make the mild Assumption \ref{as:3b} about the leaf heights. 
% but they make strong assumptions about the decrease rate $\lambda$ of the variance of Q-values along a path from the root to a leaf. 
Theorem \ref{t3c} guarantees a sublinear, no regret kind of dependence of the error on $T$ while Theorem \ref{t3b} an error $O(T\sqrt{T\cdot D})$. All proofs of our mathematical analysis are presented in Appendix A.

\section{Experimental Setup} 
    In this section, we evaluate the effectiveness of TRES through experiments on \textit{online, real-world data} from the Web and carry out comparisons with state-of-the-art focused crawling methods. All crawling experiments are conducted \textit{in an online manner} and not on curated datasets. In particular, the experiments are on the following topic settings: (a) Sports, (b) Food, (c) Hardware. We note that these topics also belong to Open Directory Project (Dmoz)\footnote{https://dmoz-odp.org/}, which has been widely used as a reference for crawling evaluation \cite{10.1007/s11280-015-0349-x, 10.1007/s11280-015-0331-7, Elaraby2019ANA, BATSAKIS20091001}. Dmoz indexes about five million URLs covering a wide range of topics \cite{10.1007/s11280-015-0349-x}. Each of these topics is not equally represented by URLs in Dmoz; we assume a topic is as difficult to be crawled as it is to be found in Dmoz. Therefore, we expect the Hardware topic to be the most difficult of the three. 
    
    Unlike common (but less realistic) experimental settings, such as \cite{10.1145/1242572.1242632, Suebchua2017EfficientTF, bifulco2021intelligent}, that use a lot of seeds, for each experiment we utilize \textit{a single seed} and average results from 10 different single-seed crawling runs. Seeds are selected using a Google search with the condition that they are not connected to each other. In other words, none of the seeds has an outlink to any other seed.

    \subsection{Evaluation Metrics}
    For the evaluation of the crawler, we first rely on the widely used \textit{harvest rate} (HR) \cite{CHAKRABARTI19991623} of the retrieved web pages, which is denoted by the percentage of relevant retrieved web pages to the total number of all retrieved web pages. 
    More formally:
    \begin{align}
        \text{HR} = \frac{\text{Number of Relevant Retrieved Web Pages}}{\text{Number of All Retrieved Web Pages}}
    \end{align}
    Observe that under our binary rewarding schema harvest rate equals the agent's average cumulative reward. Therefore, it is equivalent to a performance metric for RL \cite{axiotis2021personalized}: measuring the harvest rate of the focused crawler is equal to measuring the average cumulative reward of the RL agent.

    Second, we evaluate the number of unique retrieved web sites (\textit{domains}), consideri- ng that a domain is relevant if it contains at least one relevant web page. That is:
    \begin{align}
        \text{Domains} = {\text{Number of Relevant Retrieved Web Sites}}
    \end{align}
    We are interested in maximizing both of the above objectives, assuming that they are equally important to the user. Moreover, we examine the computational efficiency of Tree-Frontier, in comparison to the use of both synchronous and asynchronous method \cite{10.1007/978-3-319-91662-0_20} on selecting the best action from the frontier. Last but not least, we perform extensive ablation study of the proposed framework. 

    \subsection{State-Of-The-Art and Baselines}
    We compare TRES to the following \textit{state-of-the-art} methods:
    \begin{itemize}
        \item \underline{\textit{ACHE}} \cite{10.1145/1242572.1242632}: is one of the most well-known focused crawlers, which aims to maximize the harvest rate of fetched web pages through an online-learning policy. ACHE has been widely used in crawling evaluation, such as in \cite{10.1093/bioinformatics/btt571, 10.1145/3308558.3313709}, and in crawling applications \cite{10.1145/3159652.3159724}.
    
        \item \underline{\textit{SeedFinder}} (SF) \cite{10.1007/s11280-015-0331-7}: extracts new seeds and keywords from relevant pages to derive search queries. 

       \item \underline{\textit{Asynchronous Method}} (Async) \cite{10.1007/978-3-319-91662-0_20}: an RL method that approximates the brute force by calculating the Q-values only for outlinks of the current state; the Q-values of all other links in the frontier are left unchanged. We note that this method uses only path statistical features and a relevance estimation (based on cosine similarity) within the state-action representation and no keyword informa- tion.
    \end{itemize}
    
    Furthermore, the following \textit{baseline} methods are included in our experiments: 

    \begin{itemize}
        \item \underline{\textit{TRES with Asynchronous Method}} (TRES\_Async): we evaluate the performance of TRES with the asynchronous method instead of using the Tree-Frontier sampl- ing algorithm. We are interested in measuring whether the proposed sampling algorithm increases performance on both objectives.
        \item \underline{\textit{Tree-Random}} (TR): we propose a simplified version of TRES; it selects a random action from a random leaf. In other words, Tree-Random only uses exploration and selects an action with uniform sampling from the leaf representantives (line 5 in Algorithm \ref{alg:2}). 
        % Note that the Tree-Random's frontier is the same as in TRES. We are interested in measuring the performance of Tree-Random, in order to measure the effectiveness of Tree-Frontier on discretizing the state and action spaces.
        
        \item \underline{\textit{Random}} (R): selects URLs from the frontier at random.
    
    \end{itemize}

    \subsection{Learning the reward function}
    To train the BiLSTM classifier, which will constitute the reward function, we use a small offline training dataset of web pages relevant to the target topic $C$, denoted by $D_{tr}^{(C)}$, and a large training dataset of irrelevant web pages denoted by $D_{tr}^{(C')}$ (we denote by $C'$ all topics other than target topic $C$). $D_{tr}^{(C')}$ consists of URLs from 10 Dmoz supertopics: Arts, Business, Computers, Health, News, Recreation, Reference, Science and Sports. In cases where the relevant topic, $C$, is one of the aforementioned supertopics (e.g. Sports), we remove its web pages from the training set of irrelevant web pages. One the other hand, similar to \cite{10.1145/3308558.3313709}, if the relevant topic, $C$, is a subcategory of one of these supertopics (e.g. Food and Hardware), there is only a small chance that the web pages of the supertopics belong to the target topic. Specifically, we use $1000$ and $1800$ samples for the training set of relevant web pages and for each supertopic, respectively. We perform stratified 5-fold cross validation on data set $D_{tr} = D_{tr}^{(C)} \cup D_{tr}^{(C')}$. We evaluate the performance of BiLSTM and ACHE's SVM in terms of Precision and Recall of the relevant class and Macro-average F-Score (F-Macro). As shown in Table {\ref{tab:table4}}, in all three settings, BiLSTM outperforms ACHE's SVM in all evaluation metrics. 
    % We note that BiLSTM performs worse in the Hardware setting, than the other settings, because the irrelevant class includes its supertopic, Computers, the web pages of which consistently include relevant keywords (6.67 on average).

    \subsection{Keyword Expansion Evaluation}
    We evaluate the TRES strategy to identify relevant keywords from the dataset of relevant documents, $D_{tr}^{(C)}$. Recall that TRES receives an initial small keyword set $KS$ as input, and aims to expand it with as many other relevant keywords as possible. 
    
    In all evaluated settings, we initialise the starting keyword set $KS$ with the keywor-ds provided by the Dmoz topic subcategories. We initialize $KS$ with 62, 10 and 11 keywords for each of the Sports, Food and Hardware settings, respectively. The results of keyword expansion are shown in Table {\ref{tab:table1}}. We observe that the discovered keywords (keyword set $K'$) were orders of magnitude more frequent (on average) in web pages of the target topic $C$ than in irrelevant web pages of the other topics $C'$ (web pages belonging to $D_{tr}^{(C')}$). We note that in the Food and Hardware settings $C'$ includes their supertopics (Recreation and Computers) with respective mean keyword counts equal to $0.89$ and $6.67$. Thus, the discovered keywords are orders of magnitude more frequent in $C$ than in their supertopics (if existent). 
    
    Moreover, notice that the proposed keyword expansion method discovered many new relevant keywords, especially in the Food and Hardware topics, despite the relative- ly small number of initial given keywords. On the other hand, in the Sports setting our method retrieved 32 new keywords, and thus enlarged $KS$ only by 50\%. We attribute this to the fact that in the Sports setting the number of the input keywords was six times larger than in the other evaluated settings, and as a result a candidate keyword was more difficult to produce a mean cosine similarity score higher than the defined threshold. This also supports the highly selective behavior of our keyword set expansion method, when the number of the input keywords (selected by the user) is not too small.  
    
    % \begin{table}[h!]
    % \setlength\tabcolsep{6.0pt} % default value: 6pt
    % \begin{tabularx}{\columnwidth}{@{} Z *{6}{c} @{}}
    % \toprule 
    % {Domain} & {Precision (\%)} & {Recall (\%)} & {F-macro (\%)} \\
    % \midrule 
    % Sports        & 91.3 & 90.7 & 95.4 \\ 
    % Food          & {87.0} & {90.8} & 94.2  \\
    % Hardware     & 80.0 & {86.2} & {90.8} \\
    % \bottomrule 
    % \end{tabularx}
    % \caption{Classication Results of KwBiLSTM}\label{tab:table2}
    % \end{table}

    \begin{table*}[t]
    \setlength\tabcolsep{6pt} % default value: 6pt
    \begin{tabular*}{\textwidth}{cccccccccc}
    \toprule 
    Classifier & \multicolumn{3}{c}{Sports} & \multicolumn{3}{c}{Food} & \multicolumn{3}{c}{Hardware}\\
    \cmidrule(lr){2-4} \cmidrule(l){5-7} \cmidrule(l){8-10} 
    & P & R & F-m & P & R & F-m & P & R & F-m \\
    \midrule 
    ACHE's SVM       & 76.9 & 88.5 & 90.4 & 68.5 & 81.8 & 86.7 & 62.6 & 77.0 & 83.9  \\
    BiLSTM        & \textbf{91.3} & \textbf{90.7} & \textbf{95.4} & \textbf{87.0} & \textbf{90.8} & \textbf{94.2} & {\textbf{80.0}}  & \textbf{86.2} & \textbf{90.8} \\
    \bottomrule 
    \end{tabular*}
    \caption{Classication Results: Precision (P) (\%) (relevant class), Recall (R) (\%) (relevant class) and F-Macro (F-m) (i.e. the arithmetic mean over the F$_1$-scores of the two classes) (\%)}\label{tab:table4} 
    \end{table*}

    \begin{table} [t]
        \setlength\tabcolsep{30pt}
        \begin{tabularx} {\columnwidth} {@{} Z *{3}{c} } \toprule
        {} & Sports & Food & Hardware \\ \midrule
        {$C$}  & \textbf{19.55} & \textbf{15.46} & \textbf{17.66}  \\
        {$C'$}  & 0.86  & 1.04 & 1.44   \\ \hline
        {\# $KS$}  & 62  & 10 & 11   \\
        {\# $K'$}  & 32  & 205 & 129   \\
        \bottomrule
        \end{tabularx}
        \caption{Mean Keyword Count of Discovered Keywords, Initial Keywords ($KS$), Discovered Keywords ($K'$); $C$ is the relevant topic, $C'$ are the irrelevant topics.} \label{tab:table1}
    \end{table}

    \subsection{Focused Crawling Evaluation}

    \begin{table*}[t]
        % \centering
        \setlength\tabcolsep{6pt} % default value: 6pt
        \begin{tabular*}{\textwidth}{ccccccc}
        \toprule 
        Focused Crawler & \multicolumn{2}{c}{Sports} & \multicolumn{2}{c}{Food} & \multicolumn{2}{c}{Hardware}\\
        \cmidrule(lr){2-3} \cmidrule(l){4-5} \cmidrule(l){6-7} 
        & HR & Domains & HR & Domains & HR & Domains\\
        \midrule 
        Random           & 3.55 & 17 & 8.97 &59 & 1.66 & 28  \\
        Tree-Random          & 60.01 & 204 & 36.60 & 523 & 40.63 & 354   \\ \midrule
        ACHE \cite{10.1145/1242572.1242632}        & 54.43 & {2081} & 60.45 & 2153 & 49.25 & 940  \\ 
        ACHE\_100   & 47.57 & {4574} & 50.85 & 3820 & 42.18 & 1923  \\
        ACHE\_10    & 36.79 & {{6084}} & - & - & - & -   \\  \midrule
        SeedFinder \cite{10.1007/s11280-015-0331-7}          & 56.05& 1842 & 60.16 & {3642} & 38.68 & {1314}  \\
        SeedFinder\_100     & 49.97 & 4032 & 57.49 & {5043} & 31.14 & {2255}   \\ 
        SeedFinder\_10      & 37.70 & 5774 & 46.96 & {6742} & 19.63 & 2832  \\ \midrule
        Async \cite{10.1007/978-3-319-91662-0_20}          & 47.23 & 27 & 53.12 & {36} & 30.64 & {12}  \\
        Async\_100     & 40.88 & 84 & 45.83 & {98} & 18.23 & {52}   \\ 
        Async\_10      & 25.72 & 573  & 33.19 & {654} & 5.42 & {112}  \\
        Async\_5      & 8.98 & 315  & 13.53 & {581} & ($\approx 0$) & ($\approx 0$)  \\
        Async\_5\_2    & 2.77 & 214  & 6.77 & {508} & ($\approx 0$) & ($\approx 0$)  \\ \midrule
        \textit{TRES (ours)}        & \textbf{{97.43}} & 83 & {{95.59}} & 78 & {\textbf{93.64}} & 55\\ 
        \textit{TRES\_100 (ours)}   & {94.19} & 286 & \textbf{97.55} & 301 & {88.40} & 420  \\ 
        \textit{TRES\_10 (ours)}    & {84.52} & 2335 & {94.56} & 2604 & {63.14} & 1977  \\
        \textit{TRES\_5 (ours)}     & 78.56 & 4053 & 90.98 &4324 & 46.57 & {{2956}}  \\
        \textit{TRES\_5\_2 (ours)}  & 72.43 & \textbf{7011} & 87.65 & \textbf{7304} & 42.87 & \textbf{3243}  \\
        \bottomrule 
        \end{tabular*}
        \caption{Focused Crawling Evaluation Results}\label{tab:table3}
    \end{table*}     
    
    For the focused crawling evaluation, the number of pages to be retrieved in the literature is \textit{not} fixed, but ranges significantly: from 3,000 to 10,000 pages such as in \cite{10.5555/1567281.1567456, chen2008hawk, 10.1007/978-3-319-91662-0_20}, from 10,000 to 50,000 such as in \cite{guan2008guide, 10.1145/3308558.3313709}, and above 1 million pages such as in \cite{pham2018learning}. For all experiments, we set the total number of retrieved web pages equal to 20,000. We use a BiLSTM (trained on DMOZ similar to \cite{10.1145/3308558.3313709}) as ground-truth to classify the retrieved web pages of each method. We note that all crawling experiments are carried out in an online manner using real-world data from the Web. In other words, for each fetched web page, we add to the frontier all its real outlink URLs from the Web and not only those belonging to a curated dataset (such as the dataset provided by Dmoz). 

    \begin{figure}[t]
        \centering
        \includegraphics[scale=0.32]{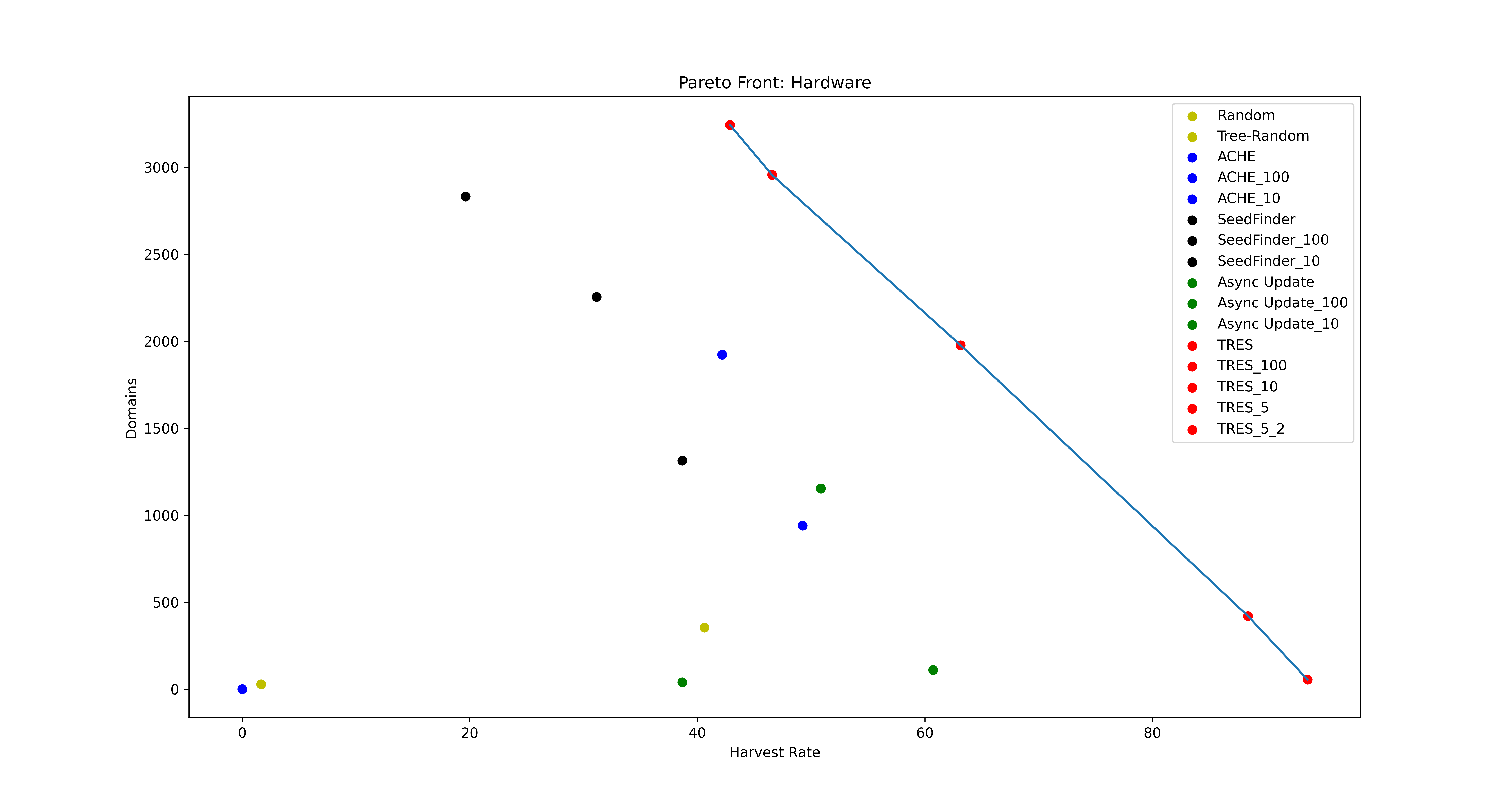}
        \caption{The Pareto front of the examined methods in the Hardware setting. An ideal focused crawler would correspond to a point at top right, that is a point maximizing both objectives.}
        \label{pareto_hardware}
    \end{figure}
    
    Table {\ref{tab:table3}} presents the focused crawling results for the three topic settings, in terms of harvest rate and the number of retrieved domains. To force crawlers to be able to maximise either objective, we consider the constraint that the crawler can retrieve at most $MAX$ web pages from a domain (web site), and we divide our experiments into the following five categories: (1) $MAX = \infty$ (i.e. 20,000), (2) $MAX = 100$, (3) $MAX = 10$, $MAX = 5$, and adaptive $MAX$ starting from $5$ and reaching $2$ (which we denote by 5\_2). 
    
    As shown in Table {\ref{tab:table3}}, for all evaluated values of $MAX$, our TRES consistently outperforms by a large margin the other methods, in terms of harvest rate. Moreover, by lowering $MAX$, TRES manages to constantly increase the number of retrieved domains (while maintaining a good harvest rate), because by doing so the crawler is forced to explore new web sites instead of fetching relevant web pages from already retrieved relevant domains. We observe that TRES\_5\_2 significantly outperforms any other evaluated method in terms of the number of retrieved domains on all evaluated settings. Therefore, TRES manages to Pareto-dominate all other evaluated methods in terms of both objectives in all three test settings. In Figure \ref{pareto_hardware}, we demonstrate the Pareto front in terms of the two objectives in the Hardware setting. As we can clearly see from the figure, TRES significantly improves the Pareto front of the examined methods. We note that a different assignment of the $MAX$ value corresponds to a different point of the Pareto front; as highlighted above, lower values of $MAX$ lead to more domains and lower harvest rate, while higher values of $MAX$ lead to the opposite results.   

    We note that in $MAX \leq 10$, ACHE aborted, as it run out of URLs to crawl, and thus we could not provide its average results, since it did not retrieve the defined number of web pages. Similar experimental problems of ACHE have been observed in previous works \cite{10.1145/3308558.3313709}, e.g. when the method is bootstrapped by a few seeds. 
    
    % Specifically, it outperforms the state-of-the-art by at least: 2\% in $MAX = \infty$, 10\% in $MAX = 100$, 13\% in $MAX = 10$, 13\% in $MAX = 10$, and 13\% in $MAX = 5\_2$. 
    
    Also, our baseline, Tree-Random, performs on a par with ACHE, SeedFinder and Async in harvest rate when $MAX = \infty$. In particular, Tree-Random outperforms the state-of-the-art methods in the Sports setting and SeedFinder and Async in the Hardware setting.  The lower harvest rate scores of the evaluated methods could be attributed to the fact that these methods quickly level off when fewer relevant web sites are found, since they rely on the relevant web sites that are already discovered \cite{10.1145/3308558.3313709}. Moreover, Async's low performance is due to the weak state representation that it adopts, as we highlight next in the ablation study.   

    Comparing the harvest rates of Tree-Random and the Random crawler in the Sports topic, we observe that the former achieves $60.01\%$ and the latter $3.55\%$. This means that the true percentage of relevant URLs were on average $3.55\%$ in the frontier, yet the probability of selecting a leaf of Tree-Frontier leading to a relevant URL is $60.01\%$. Thus, in such a scenario, Tree-Frontier generated a lot of leaves, in which the agent demonstrated on average the desirable behavior, despite the fact that there were on average 27 times more actions (assuming a similar frontier distribution with Random crawler), leading to irrelevant URLs, in the frontier. We also note that such a good performance of Tree-Random also validates our intuition for the effectiveness of the variance reduction as a splitting criterion for Tree-Frontier.

    \subsection{Ablation Study}

    As we discussed in the previous subsection, TRES achieves state-of-the-art results in the two crawling objectives. In this subsection, we are interested in conducting an ablation study in order to assess various aspects of TRES in detail. 
    
    \subsubsection{Tree-Frontier vs Synchronous / Asynchronous Method}
    The first question we aim to address is whether the use of Tree-Frontier indeed enhances crawling performance. As we have already discussed, the use of Tree-Frontier updates facilitates efficient sampling of the frontier, accelerating by orders of magnitude the computationally infeasible synchronous method, which exhaustively examines every frontier sample at
    each time step, while allowing a good performance. Figure \ref{fig:frontier} illustrat-es the behavior of TRES as crawling progresses and aims to explain why the synchronous method is computationally infeasible in practice. We observe that the size of the subset of the frontier that TRES examines at each time step is orders of magnitude smaller than the full frontier size. In particular, at 20,000 time steps the frontier contains approximately 800,000 URLs, all corresponding to candidate web pages for the crawler to fetch in the next crawling step. On the other hand, the Tree-Frontier has only 2,500 leaves. This means that at that crawling step (i.e. the 20,000th step), with Tree-Frontier sampling the RL crawler ought to find the best URL out of 2,500 representative URLS in the frontier, while with the synchronous method the RL crawler ought to examine all 800,000 URLs of the frontier. Proportionally, this also means that at each time step with Tree-Frontier sampling the RL crawler ought to examine approximately 320 times less URLs than with the synchronous method, which makes the latter computationally infeasible to run for long crawling sessions in practice, also taking into account the constantly increasing frontier size over time. It is worth noting that these experimental findings validate and strengthen the importance of Theorems 1 and 2 about the overall time complexity of our method and the synchronous one. 

    \begin{figure*}[t]
        \begin{minipage}{.5\linewidth}
        \centering
        \subfloat[]{\label{main:a}\includegraphics[scale=.4]{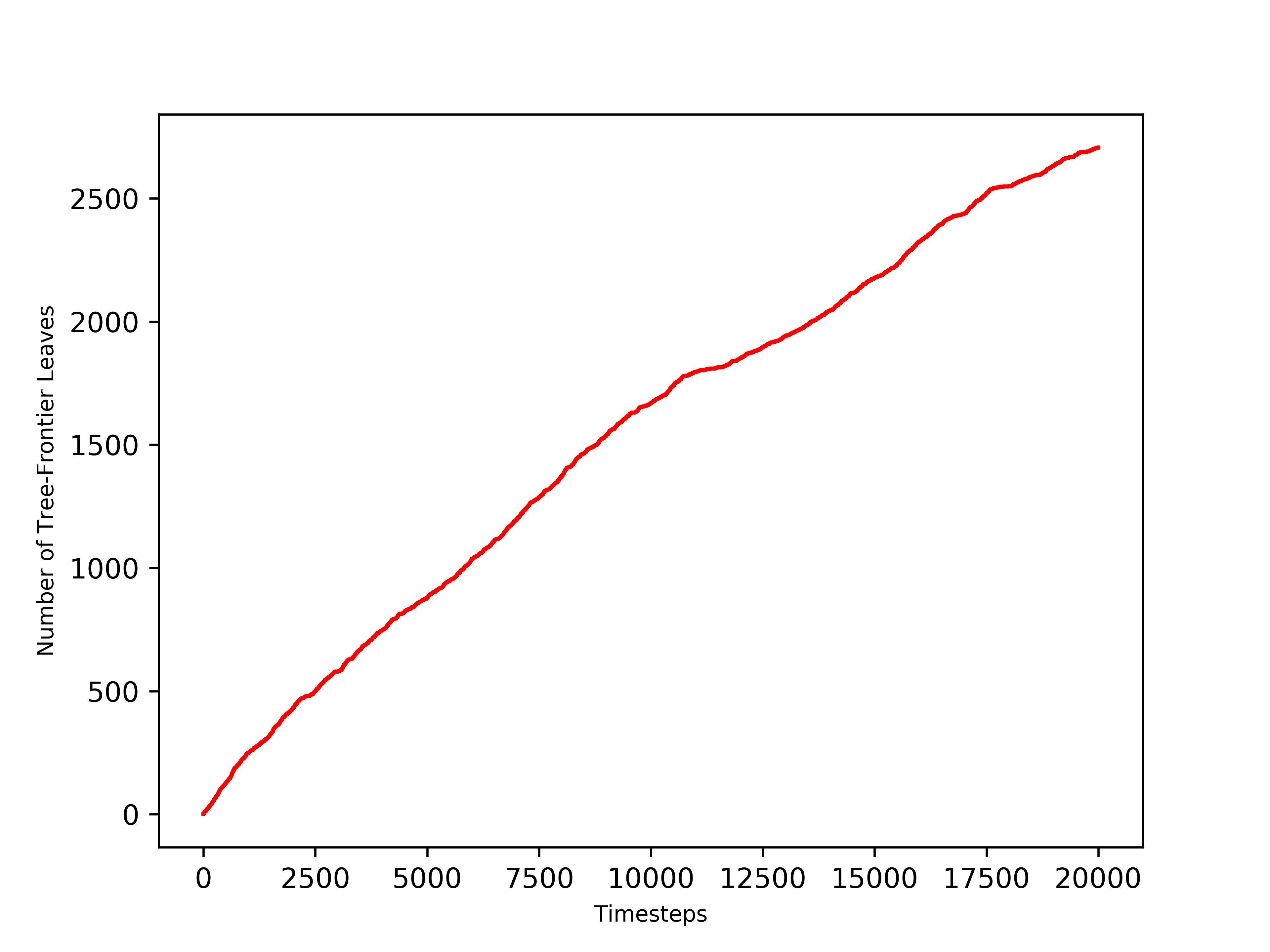}}
        \end{minipage}%
        \begin{minipage}{0.5\linewidth}
        \centering
        \subfloat[]{\label{main:b}\includegraphics[scale=.4]{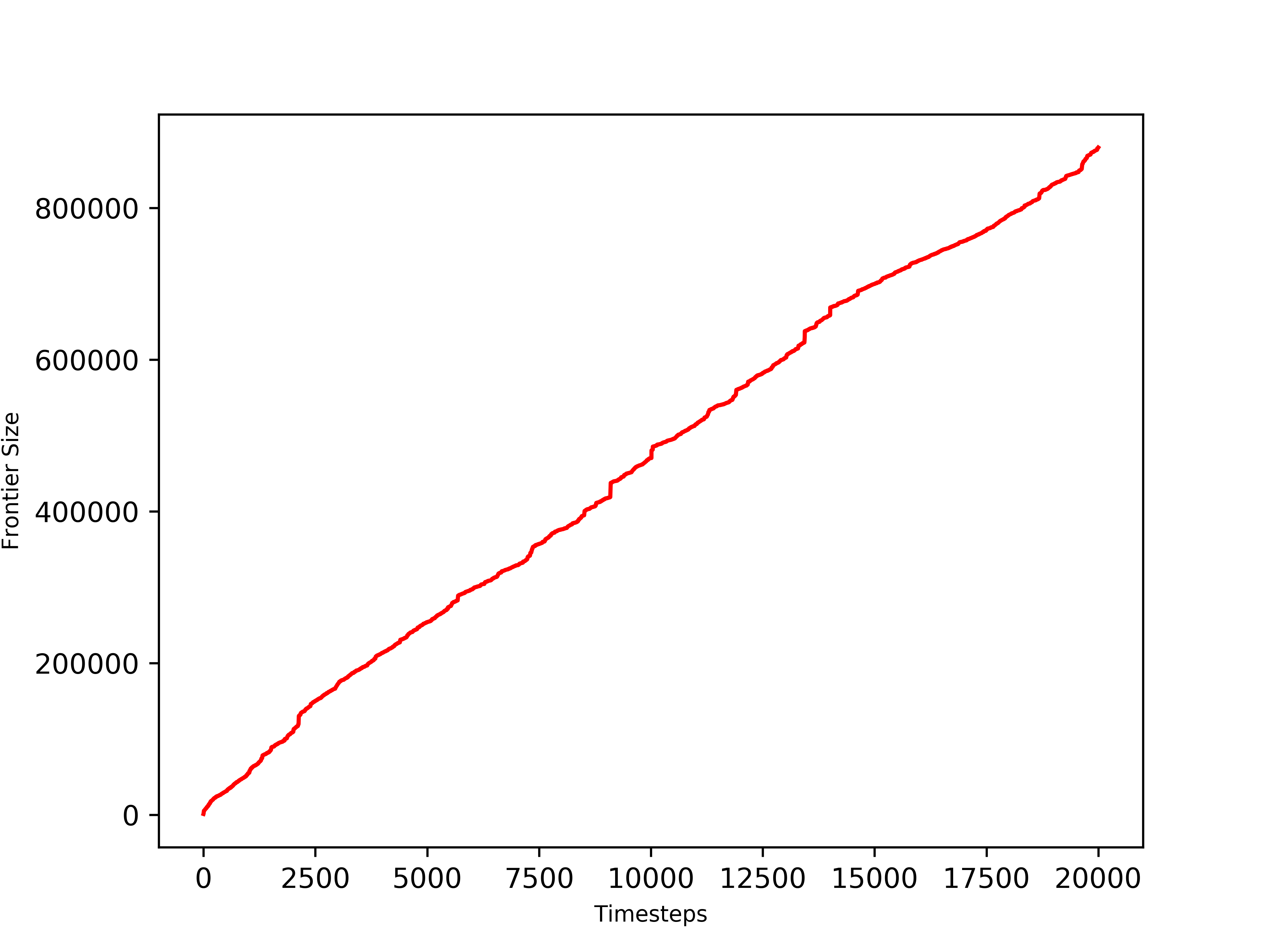}}
        \end{minipage}\par\medskip
        % \centering
        % \subfloat[]{\label{main:c}\includegraphics[scale=.5]{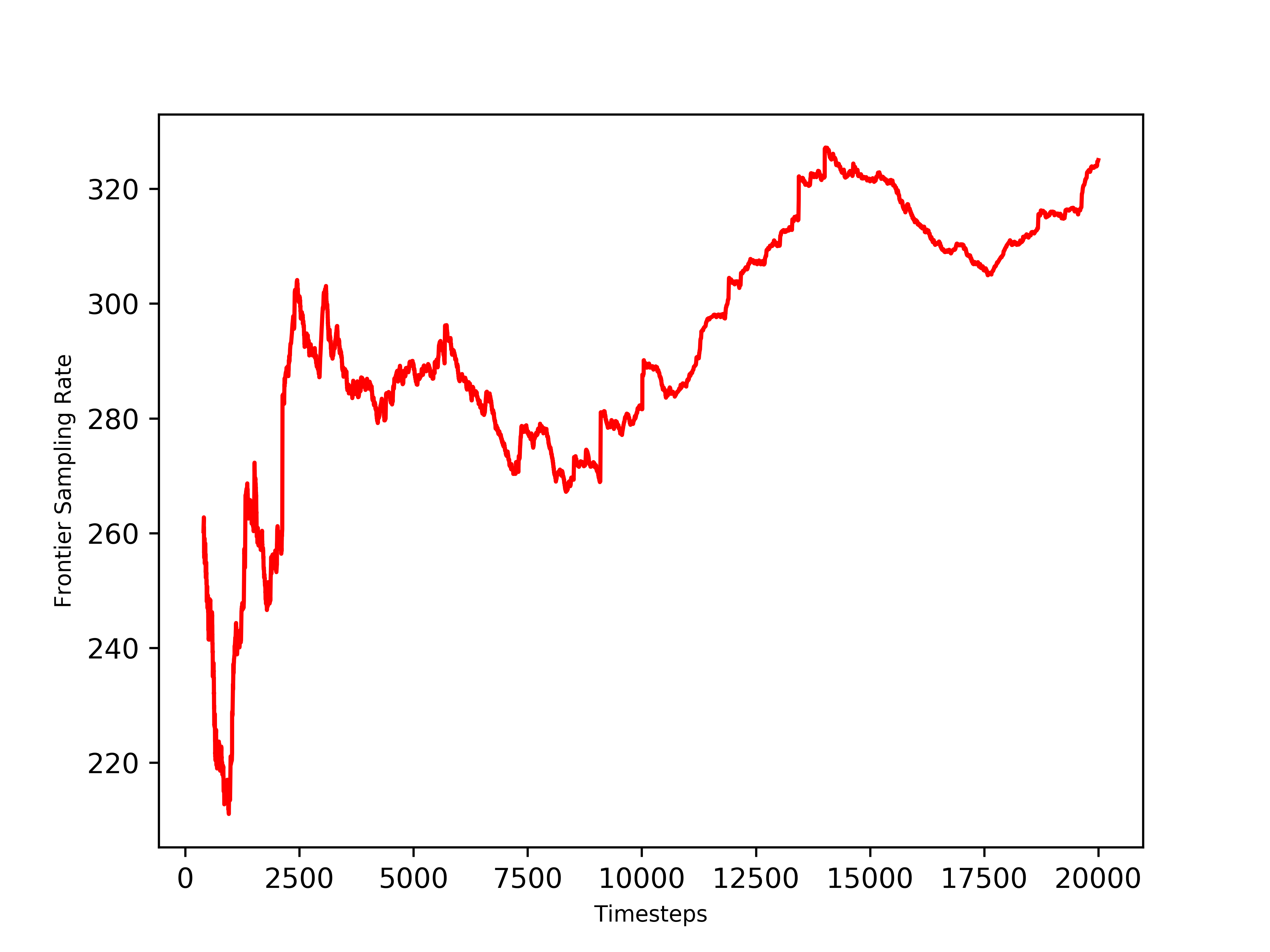}}
        \caption{Frontier Sampling: (a) Number of Tree-Frontier leaves over time and (b) Frontier size over time. The figures illustrate that with Tree-Frontier sampling (corresponding to left figure) the RL crawler ought to examine approximately 320 times less URLs than with the synchronous method (corresponding to right figure), which makes the latter computationally infeasible to run for long crawling sessions in practice}
        \label{fig:frontier}
    \end{figure*}

    % In Fig. \ref{fig:frontier}, we can also observe that the dependence of the frontier size and the number of leaves (in Tree-Frontier) to the number of time steps is not exactly linear. In the case of the full frontier (synchronous method) this is attributed to the variance in the number of outlinks that a web page could contain. There is an oscillation around the mean outlink number, which would ideally be equal to the slope of the curve (see Fig. \ref{fig:frontier}, subplot (b)), had there been no variance in the random variable of the number of outlink URLs (of a web page). In the case of Tree-Frontier (see Fig. \ref{fig:frontier}, subplot (a)), the variance of the slope is due to the fact that at some time steps no split is performed and, thus, the number of leaves remains the same. At other time steps, binary splits may occur, increasing the size of the set of the candidate frontier samples that we examine. 

    We are also interested in measuring the advantage of using the Tree-Frontier over the asynchronous method of \cite{10.1007/978-3-319-91662-0_20}. In Table \ref{tab:ablation}, we have also included experiments (with $MAX \leq 5$) of TRES which adopts the asynchronous method instead of the Tree-Frontier (referred as TRES\_Async). As can be clearly seen, the use of Tree-Frontier improves both the harvest rate and the number of retrieved domains by at least 3\% and 300 domains, respectively, on all evaluated settings. Moreover, TRES' state representation significantly improves both the harvest rate and the number of retrieved domains over Async (crawling algorithm of \cite{10.1007/978-3-319-91662-0_20}) on all evaluated settings and values of $MAX$. We attribute the ineffectiveness of Async to its proposed state representation which lacks any keyword information but also makes use of less informative relevance estimation via a simple cosine similarity; in contrast to the more accurate relevance predictions of BiLSTM that TRES/TRES\_Async exploit.

    %%%%% FUTURE WORK %%%%%%
    % It is worth mentioning that the Tree-Frontier update approach could be also employed in other RL settings. Tree-Frontier effectively tackles the exponential increase in the size of the action space, by reducing the number of actions that are examined at each time step, that is the number of samples for which Q-values are calculated. A Tree-Frontier update is performed in a stratified way through an online discretization of both large state and action spaces. Therefore, in an RL setting with rampant action space growth (frontier), the Tree-Frontier algorithm could potentially be preferable to an exhaustive calculation of Q-values.

    \subsubsection{Evaluating ACHE's SVM in TRES}
    Furthermore, we aim to address the important question of how the selected classifier (that constitutes the RL reward) affects the crawling performance. To this end, we compare TRES to TRES\_SVM which makes use of ACHE's SVM as the reward function of its RL online training on all evaluated settings with $MAX \leq 10$. To make the comparisons fair, after we have collected the resulted fetched URLs, we evaluate all crawls with the same ground-truth classifier (i.e. the BiLSTM classifier that was also used in the original evaluation). As Table \ref{tab:ablation} also demonstrates, TRES with BiLSTM performs consistently better than TRES\_SVM; however the latter performs on a par with the former on the Sports and Food settings, but significantly worse on the Hardware setting. We attribute this to the fact that ACHE's SVM has significantly lower precision (but good recall) in Hardware than in Sports and Food, and therefore TRES\_SVM retrieves many irrelevant web pages that it erroneously reckons to be relevant. We also observe that TRES\_SVM\_10 performs better in terms of harvest rate than ACHE and SeedFinder (which use SVM in their implementation) on all settings, while baseline TRES\_SVM\_5\_2 achieves slightly worse performance in the number of retrieved doma-ins.

    \begin{table*}[t]
        % \centering
        \setlength\tabcolsep{6pt} % default value: 6pt
        \begin{tabular*}{\textwidth}{ccccccc}
        \toprule 
        Focused Crawler & \multicolumn{2}{c}{Sports} & \multicolumn{2}{c}{Food} & \multicolumn{2}{c}{Hardware}\\
        \cmidrule(lr){2-3} \cmidrule(l){4-5} \cmidrule(l){6-7} 
        & HR & Domains & HR & Domains & HR & Domains\\
        \midrule 
        TRES\_SVM\_10    & {{79.88}} & 1692 & {{87.65}} & 2094 & {34.64} & {1189}  \\ 
        TRES\_10\_NH    & {82.17} & {\textbf{2393}} & 94.22 & {\textbf{3189}} & 58.89 & {\textbf{2080}}   \\ 
        \textit{TRES\_10 (ours)}    & {\textbf{84.52}} & 2335 & {\textbf{94.56}} & 2604 & {\textbf{63.14}} & 1977  \\ \midrule
        TRES\_Async\_5      & 74.17 & 3788\ & 87.26 & {3719} & 42.34 & {2550}  \\
        \textit{TRES\_5 (ours)}     & \textbf{78.56} & \textbf{4053} & \textbf{90.98} & \textbf{4324} & \textbf{46.57} & {{\textbf{2956}}}  \\ \midrule
        TRES\_SVM\_5\_2   & {{60.23}} & {5783} & {{70.39}} & {5342} & {4.4} & {318}  \\
        TRES\_Async\_5\_2      & 65.55 & 6595\ & 83.43 & {7012} & 34.34 & {2649}  \\  
        \textit{TRES\_5\_2 (ours)}  & \textbf{72.43} & \textbf{7011} & \textbf{87.65} & \textbf{7304} & \textbf{42.87} & \textbf{3243}  \\
        \bottomrule 
        \end{tabular*}
        \caption{Ablation Study of TRES}\label{tab:ablation}
    \end{table*}

    \subsubsection{Effect of the web domain features}
    The use of web domain features is based on the intuition that the focused crawler is more likely to fetch more relevant web pages by trying to avoid less relevant or unknown web sites (domains). In Table \ref{tab:ablation}, we show the trade-off between maximizing the harvest rate of web pages and maximizing the number of different domains. We evaluate TRES with $MAX=10$ with and without the web domain features (\textit{TRES\_10} and \textit{TRES\_10\_NH}, respectively). We present the evaluation on the $MAX=10$ setting, where TRES performs on a par with the other methods in the domain maximization task. We observe that the use of Hub features boosts the harvest rate performance in all evaluated topic settings, while the absence of these features helps increase the number of different domains. 
    
    Recall that web domain features describe the expected relevance of a domain (web site) for the current iteration of the algorithm. The use of web domain features leads the crawler to choose more likely relevant URLs from the frontier, ignoring the fact that they may belong to a fetched domain. On the other hand, their absence allows the crawler to be more explorative towards new domains, trying to identify new relevant locations on the Web.

\section{Related Work}

The problem of discovering interesting content on the Web has received a surge of approaches in the literature. These approaches are often categorized in two classes  \cite{10.1145/3308558.3313709}: (a) \textit{web crawling methods} and (b) \textit{search-based discovery methods}. 

Web crawling methods utilize the link structure to discover new content by automat-ically fetching and iteratively selecting URLs extracted from the visited web pages. They can be divided into: (a) \textit{backward} and (b) \textit{focused} crawling methods. Backward crawling methods, e.g. the BIPARTITE crawler \cite{barbosa-etal-2011-crawling}, explore URLs through backlink search. In this work, we are not interested in evaluating backward crawling (or methods that may use it), since this type of crawling is impractical requiring access to paid APIs for most real users.

Since the introduction of focused crawling by Chakrabarti et al. \cite{CHAKRABARTI19991623}, several improv-ements have been made to the basic process. In this paper, we are interested in improv-ements related to \textit{learning-based} approaches. Many approaches, such as \cite{6165295, RAJIV20213, SALEH2017181, Elaraby2019ANA, shrivastava2022efficient}, use  classifiers in order to determine the crawling strategy by retrieving URLs from the frontier. Suebchua et al. \cite{Suebchua2017EfficientTF} introduced the ``Neighborhood feature'' in focused crawling, which utilizes the relevance of all already fetched web pages of the same domain (web site), in order to effectively find URLs belonging to the same domain. However, a certain limitation is that their approach of identifying hubs highly depends on how many relevant URLs have been found on the already crawled web sites.
ACHE \cite{10.1145/1242572.1242632} is an adaptive crawler aiming to maximize the harvest rate \cite{CHAKRABARTI19991623} of fetched web pages through an online-learning policy. It learns to identify promising URLs and, as the crawl progresses, it adapts its focus relying on the relevant web sites that it has fetched. Nevertheless, ACHE's weak classifier and its quite simple online learning approach make it prone to weak performance and even to abort unexpectedly when only a few seeds are used as input. 

Focused crawling methods often utilize natural language processing techniques to benefit from the semantic information of keywords or related text. Du et al. \cite{10.1016/j.asoc.2015.07.026} integrated TF-IDF scores of the words to construct {topic and text semantic vectors} and compute cosine similarities of these vectors to select URLs. Deep-learning-based methods \cite{dhanith2021word, neelakandan2022automated, kuehn2023threatcrawl} use appropriate word embeddings and deep neural networks to calculate semantic similarity scores between web pages and the topics of interest. Agre et al. \cite{agre2015keyword} adopted keyword-driven focused crawling with relevancy decision mechanisms that used ontology concepts for improving crawler's performance. Liu et al. \cite{liu2022applying} proposed an ontology learning and multi-objective ant colony optimization method based on TF-IDF similarity scores for the construction of ontology topic models for focused crawling. However, all these approaches fail to explicitly parameterize the crawler's strategy, so that the crawler to explore effectively different but informative regions of the Web, whereas they are heavily based on prior knowledge and heuristic search criteria. 

Reinforcement Learning (RL) can be a natural way to parameterize the focused crawler's strategy (in the form of a learned policy) for optimizing focused web crawling. Rennie and McCallum \cite{Rennie99efficientweb} introduced RL in focused crawling, but in their RL framew-ork each action was immediately disconnected from the state. InfoSpiders \cite{Menczer99adaptiveretrieval, 10.1145/1031114.1031117} uses policy approximation based on a heuristic state representation but forces the crawler agent to follow a single path of web pages leading to poor performance. Han et al. \cite{10.1007/978-3-319-91662-0_20} proposed an MDP formulation for focused crawling, with handcrafted shared state-action representation, and solved it with SARSA \cite{10.5555/3312046}. However, their MDP formulation somewhat unnaturally allows the agent to deviate from the successive order of states; i.e. it may select actions that do not really exist in a current state. To select actions from a greedy RL crawling policy, they introduced the synchronous method; at each time step they updated the Q-values of all actions (URLs) and selected the URL with the highest value. To overcome the computational cost of the synchronous method due to the large action space, they also proposed the asynchronous method which updates the Q-values of only those actions belonging to the current state; which however makes it diﬃcult for the crawler agent to be near-optimal with respect to its greedy RL policy. Axiotis et al. \cite{axiotis2021personalized} used deep Q-learning to train an RL crawler in an online active learning manner, adopting the synchronous method for estimating the URLs of the frontier. The reason why they managed to use the synchronous method, without noticing its infeasibility issue, was the very short crawling horizon they set in their problem (i.e. they used $T<1000$), which maintained the increasing frontier size quite small.

% However, in the asynchronous method the Q-values of the URLs in the frontier are calculated at different time steps making it diﬃcult for the crawler agent to be near-optimal with respect to its greedy RL policy. 

% RLwCS \cite{10.5555/1567281.1567456} uses tabular Q-learning to select the best classifier from a defined pool of classifiers, all of which evaluate the candidate URLs of the frontier. 

Search-based discovery methods utilize queries on search engine APIs and form a strategy for query improvement. SeedFinder \cite{10.1007/s11280-015-0331-7} tries to find new seeds in order to best bootstrap the crawling process of an arbitrary focused crawler, by executing queries on search engines utilizing terms (keywords) extracted from relevant texts. 
% Liakos et al. \cite{10.1007/s11280-015-0349-x} expanded a keyword collection by selecting the keyword with the highest TF-IDF score to form a query. By executing this query on a search engine, new candidate keywords were retrieved and the ones with the highest scores were stored in the collection. 
Pham et al. \cite{10.1145/3308558.3313709} proposed DISCO which uses a multi-armed bandit strategy to combine different search operators (queries on APIs, forward and backward crawling) as proxies to collect relevant web pages, and rank them, based on their similarity to seeds. Zhang et al. \cite{zhang2021dsdd} used search engines and online learning to incrementally build a model that recognizes web sites that contain datasets with relevant context. Similar to DISCO, this method utilizes a set of discovery operators (including backward crawling) to improve the search, and applies a multi-armed bandit algorithm as well for selecting the best operator.

To face certain challenges arising in the above literature, in this paper we propose TRES, a novel end-to-end RL focused crawler which does not require access to any possibly paid API. TRES uses RL to learn an effective crawling strategy, that never aborts unexpectedly even when a single seed is used as input. The proposed framework is built on top of a novel, mathematical sound MDP formulation that allows the crawler agent to both expand multiple web paths of the Web graph and preserve the successive order of state transitions by always selecting actions that are explicitly available in the current state. To face the infeasible computational cost of the synchronous method during the action selection, we propose Tree-Frontier sampling, an algorithm that adaptively discretizes the large state-action joint space and selects the best action out of only a few representative ones. Moreover, TRES introduces informative features for representating the states, such as web domain and keyword features, that implicitly encourage the crawler to better explore interesting locations of the Web. TRES also makes use of pretrained word embeddings in order to effectively identify informative keywords for explicitly enriching the action representation of the candidate URLs and facilitate the discovery of relevant web pages, without utilizing any external search engine, or explicitly constructing any ontology concepts. 

% From a more theoretical standpoint, Gouriten et al. \cite{10.1145/2631775.2631795} highlighted the frontier batch disadvantage, considering that a batch is the number of successive crawling time steps in which the frontier has not been updated. They showed that even in an offline setting, the problem of determining the optimal sequence of fetched pages, is NP-hard. To tackle this problem, RL has been employed to find a policy in order to select which web pages to fetch from the frontier.
 
% To face certain challenges arising in the above literature, we propose TRES, a novel RL-based focused crawling framework. Similar to \cite{10.1007/978-3-319-91662-0_20}, we formulated the focused crawling setting as an MDP, but we achieved to allow the agent to both preserve the successive order of state transitions and avoid the ``crawling traps'', in which the agent only follows the URLs that are hyperlinks of the last crawled web pages. We encourage a more exploratory crawling strategy through the use of statistical features related to keywords, relevance estimation and the web path. To deal with the large URL collection that is stored in the frontier (large state-action space), we introduce the Tree-Frontier method that efficiently organizes this collection and allows subtle sampling from it. 

\section{Conclusion} 
    In this paper, we proposed an end-to-end reinforcement learning focused crawling framework that outperforms state-of-the-art methods on the harvest rate and the number of retrieved web sites (domains). We formalized focused crawling as a Markov Decision Process, in which the agent necessarily follows the successive order of state transitions and selects actions (related to URLs) that are explicitly available in the current state (related to the Web graph). In addition, we improved the impractical time complexity of the synchronous method through the use of Tree-Frontier that we introduced. A Tree-Frontier update handles the large state and action spaces through adaptive discretization, while at the same time facilitating efficient sampling of the frontier, improving by orders of magnitude the computational cost issue of the synchronous method.
    
    As a future work, we would like to investigate the addition of an intrinsic reward, aiming to enhance the agent's exploration for visiting a larger number of relevant domains. Moreover, we intend to effectively design the reward function towards giving more emphasis on exploring specific regions of the Pareto front (e.g. focusing more on the harvest rate but also fetching a specific number of domains). Such an improvement on the proposed framework would further increase its usefulness by capturing a wider range of real-time use cases.
    An additional open challenge is the minimization of the size of the labeled data set needed to train the classifier, which could be used to estimate the reward function.

%% If you have bibdatabase file and want bibtex to generate the
%% bibitems, please use
%%
 % \bibliographystyle{elsarticle-num}
 \bibliographystyle{apalike}
 \bibliography{cas-refs}

\begin{thebibliography}{}

\bibitem[Agre and Mahajan, 2015]{agre2015keyword}
Agre, G.~H. and Mahajan, N.~V. (2015).
\newblock Keyword focused web crawler.
\newblock In {\em 2015 2nd International Conference on Electronics and Communication Systems (ICECS)}, pages 1089--1092. IEEE.

\bibitem[Avrachenkov et~al., 2021]{avrachenkov2021deep}
Avrachenkov, K., Borkar, V., and Patil, K. (2021).
\newblock Deep reinforcement learning for web crawling.
\newblock In {\em 2021 Seventh Indian Control Conference (ICC)}, pages 201--206. IEEE.

\bibitem[Axiotis et~al., 2021]{axiotis2021personalized}
Axiotis, E., Kontogiannis, A., Kalpoutzakis, E., and Giannakopoulos, G. (2021).
\newblock A personalized machine-learning-enabled method for efficient research in ethnopharmacology. the case of the southern balkans and the coastal zone of asia minor.
\newblock {\em Applied Sciences}, 11(13):5826.

\bibitem[Barbosa et~al., 2011]{barbosa-etal-2011-crawling}
Barbosa, L., Bangalore, S., and Rangarajan~Sridhar, V.~K. (2011).
\newblock Crawling back and forth: Using back and out links to locate bilingual sites.
\newblock In {\em Proceedings of 5th International Joint Conference on Natural Language Processing}, pages 429--437, Chiang Mai, Thailand. Asian Federation of Natural Language Processing.

\bibitem[Barbosa and Freire, 2007]{10.1145/1242572.1242632}
Barbosa, L. and Freire, J. (2007).
\newblock An adaptive crawler for locating hidden-web entry points.
\newblock In {\em Proceedings of the 16th International Conference on World Wide Web}, WWW '07, page 441–450, New York, NY, USA. Association for Computing Machinery.

\bibitem[Batsakis et~al., 2009]{BATSAKIS20091001}
Batsakis, S., Petrakis, E.~G., and Milios, E. (2009).
\newblock Improving the performance of focused web crawlers.
\newblock {\em Data \& Knowledge Engineering}, 68(10):1001--1013.

\bibitem[Bewley and Lawry, 2021]{Bewley_Lawry_2021}
Bewley, T. and Lawry, J. (2021).
\newblock Tripletree: A versatile interpretable representation of black box agents and their environments.
\newblock {\em Proceedings of the AAAI Conference on Artificial Intelligence}, 35(13):11415--11422.

\bibitem[Bifulco et~al., 2021]{bifulco2021intelligent}
Bifulco, I., Cirillo, S., Esposito, C., Guadagni, R., and Polese, G. (2021).
\newblock An intelligent system for focused crawling from big data sources.
\newblock {\em Expert Systems with Applications}, 184:115560.

\bibitem[Chakrabarti et~al., 1999]{CHAKRABARTI19991623}
Chakrabarti, S., {van den Berg}, M., and Dom, B. (1999).
\newblock Focused crawling: a new approach to topic-specific web resource discovery.
\newblock {\em Computer Networks}, 31(11):1623--1640.

\bibitem[Chen and Zhang, 2008]{chen2008hawk}
Chen, X. and Zhang, X. (2008).
\newblock Hawk: A focused crawler with content and link analysis.
\newblock In {\em 2008 IEEE International Conference on e-Business Engineering}, pages 677--680. IEEE.

\bibitem[Dhanith et~al., 2021]{dhanith2021word}
Dhanith, P., Surendiran, B., and Raja, S. (2021).
\newblock A word embedding based approach for focused web crawling using the recurrent neural network.

\bibitem[Ding et~al., 2004]{ding2004link}
Ding, C.~H., Zha, H., He, X., Husbands, P., and Simon, H.~D. (2004).
\newblock Link analysis: hubs and authorities on the world wide web.
\newblock {\em SIAM review}, 46(2):256--268.

\bibitem[Du et~al., 2015]{10.1016/j.asoc.2015.07.026}
Du, Y., Liu, W., Lv, X., and Peng, G. (2015).
\newblock An improved focused crawler based on semantic similarity vector space model.
\newblock {\em Appl. Soft Comput.}, 36(C):392–407.

\bibitem[Eiron et~al., 2004]{10.1145/988672.988714}
Eiron, N., McCurley, K.~S., and Tomlin, J.~A. (2004).
\newblock Ranking the web frontier.
\newblock In {\em Proceedings of the 13th International Conference on World Wide Web}, WWW '04, page 309–318, New York, NY, USA. Association for Computing Machinery.

\bibitem[Elaraby et~al., 2019]{Elaraby2019ANA}
Elaraby, M., Abuelenin, S.~M., Moftah, H.~M., and Rashad, M.~Z. (2019).
\newblock A new architecture for improving focused crawling using deep neural network.
\newblock {\em J. Intell. Fuzzy Syst.}, 37:1233--1245.

\bibitem[Gouriten et~al., 2014]{10.1145/2631775.2631795}
Gouriten, G., Maniu, S., and Senellart, P. (2014).
\newblock Scalable, generic, and adaptive systems for focused crawling.
\newblock In {\em Proceedings of the 25th ACM Conference on Hypertext and Social Media}, HT '14, page 35–45, New York, NY, USA. Association for Computing Machinery.

\bibitem[Graves et~al., 2005]{10.1007/11550907_126}
Graves, A., Fern{\'a}ndez, S., and Schmidhuber, J. (2005).
\newblock Bidirectional lstm networks for improved phoneme classification and recognition.
\newblock In Duch, W., Kacprzyk, J., Oja, E., and Zadro{\.{z}}ny, S., editors, {\em Artificial Neural Networks: Formal Models and Their Applications -- ICANN 2005}, pages 799--804, Berlin, Heidelberg. Springer Berlin Heidelberg.

\bibitem[Guan et~al., 2008]{guan2008guide}
Guan, Z., Wang, C., Chen, C., Bu, J., and Wang, J. (2008).
\newblock Guide focused crawler efficiently and effectively using on-line topical importance estimation.
\newblock In {\em Proceedings of the 31st Annual International ACM SIGIR Conference on Research and Development in Information Retrieval}, pages 757--758.

\bibitem[Han et~al., 2018]{10.1007/978-3-319-91662-0_20}
Han, M., Wuillemin, P.-H., and Senellart, P. (2018).
\newblock Focused crawling through reinforcement learning.
\newblock In Mikkonen, T., Klamma, R., and Hern{\'a}ndez, J., editors, {\em Web Engineering}, pages 261--278, Cham. Springer International Publishing.

\bibitem[Hasselt et~al., 2016]{10.5555/3016100.3016191}
Hasselt, H.~v., Guez, A., and Silver, D. (2016).
\newblock Deep reinforcement learning with double q-learning.
\newblock In {\em Proceedings of the Thirtieth AAAI Conference on Artificial Intelligence}, AAAI'16, page 2094–2100. AAAI Press.

\bibitem[Jiang et~al., 2013]{6165295}
Jiang, J., Song, X., Yu, N., and Lin, C.-Y. (2013).
\newblock Focus: Learning to crawl web forums.
\newblock {\em IEEE Transactions on Knowledge and Data Engineering}, 25(6):1293--1306.

\bibitem[Jiang et~al., 2021]{Jiang2021AnER}
Jiang, W.-C., Hwang, K.-S., and Lin, J.-L. (2021).
\newblock An experience replay method based on tree structure for reinforcement learning.
\newblock {\em IEEE Transactions on Emerging Topics in Computing}, 9:972--982.

\bibitem[Kontogiannis and Vouros, 2023a]{kontogiannis2023xdqn}
Kontogiannis, A. and Vouros, G. (2023a).
\newblock Xdqn: Inherently interpretable dqn through mimicking.
\newblock {\em arXiv preprint arXiv:2301.03043}.

\bibitem[Kontogiannis and Vouros, 2023b]{kontogiannis2023inherently}
Kontogiannis, A. and Vouros, G.~A. (2023b).
\newblock Inherently interpretable deep reinforcement learning through online mimicking.
\newblock In {\em International Workshop on Explainable, Transparent Autonomous Agents and Multi-Agent Systems}, pages 160--179. Springer.

\bibitem[Kuehn et~al., 2023]{kuehn2023threatcrawl}
Kuehn, P., Schmidt, M., and Reuter, C. (2023).
\newblock Threatcrawl: A bert-based focused crawler for the cybersecurity domain.
\newblock {\em arXiv preprint arXiv:2304.11960}.

\bibitem[Liakos et~al., 2016]{10.1007/s11280-015-0349-x}
Liakos, P., Ntoulas, A., Labrinidis, A., and Delis, A. (2016).
\newblock Focused crawling for the hidden web.
\newblock {\em World Wide Web}, 19(4):605–631.

\bibitem[Liu et~al., 2018]{Liu2018TowardID}
Liu, G., Schulte, O., Zhu, W., and Li, Q. (2018).
\newblock Toward interpretable deep reinforcement learning with linear model u-trees.
\newblock In {\em ECML/PKDD}.

\bibitem[Liu et~al., 2022]{liu2022applying}
Liu, J., Dong, Y., Liu, Z., and Chen, D. (2022).
\newblock Applying ontology learning and multi-objective ant colony optimization method for focused crawling to meteorological disasters domain knowledge.
\newblock {\em Expert Systems with Applications}, 198:116741.

\bibitem[Menczer et~al., 1999]{Menczer99adaptiveretrieval}
Menczer, F., Belew, R.~K., Carbonell, J., Yang, Y., and Cohen, W. (1999).
\newblock Adaptive retrieval agents: Internalizing local context and scaling up to the web.
\newblock In {\em Machine Learning}, pages 200--0.

\bibitem[Menczer et~al., 2004]{10.1145/1031114.1031117}
Menczer, F., Pant, G., and Srinivasan, P. (2004).
\newblock Topical web crawlers: Evaluating adaptive algorithms.
\newblock {\em ACM Trans. Internet Technol.}, 4(4):378–419.

\bibitem[Mikolov et~al., 2013]{DBLP:journals/corr/abs-1301-3781}
Mikolov, T., Chen, K., Corrado, G., and Dean, J. (2013).
\newblock Efficient estimation of word representations in vector space.
\newblock In Bengio, Y. and LeCun, Y., editors, {\em 1st International Conference on Learning Representations, {ICLR} 2013, Scottsdale, Arizona, USA, May 2-4, 2013, Workshop Track Proceedings}.

\bibitem[Mnih et~al., 2015]{Mnih2015HumanlevelCT}
Mnih, V., Kavukcuoglu, K., Silver, D., Rusu, A.~A., Veness, J., Bellemare, M.~G., Graves, A., Riedmiller, M.~A., Fidjeland, A., Ostrovski, G., Petersen, S., Beattie, C., Sadik, A., Antonoglou, I., King, H., Kumaran, D., Wierstra, D., Legg, S., and Hassabis, D. (2015).
\newblock Human-level control through deep reinforcement learning.
\newblock {\em Nature}, 518:529--533.

\bibitem[Neelakandan et~al., 2022]{neelakandan2022automated}
Neelakandan, S., Arun, A., Bhukya, R.~R., Hardas, B.~M., Kumar, T.~C., Ashok, M., et~al. (2022).
\newblock An automated word embedding with parameter tuned model for web crawling.
\newblock {\em Intelligent Automation \& Soft Computing}, 32(3).

\bibitem[Partalas et~al., 2008]{10.5555/1567281.1567456}
Partalas, I., Paliouras, G., and Vlahavas, I. (2008).
\newblock Reinforcement learning with classifier selection for focused crawling.
\newblock In {\em Proceedings of the 2008 Conference on ECAI 2008: 18th European Conference on Artificial Intelligence}, page 759–760, NLD. IOS Press.

\bibitem[Patil and Patil, 2016]{patil2016implementation}
Patil, Y. and Patil, S. (2016).
\newblock Implementation of enhanced web crawler for deep-web interfaces.
\newblock {\em International Research Journal of Engineering and Technology (IRJET)}, 3:2088--2092.

\bibitem[Pham et~al., 2018a]{10.1145/3159652.3159724}
Pham, K., Santos, A., and Freire, J. (2018a).
\newblock Learning to discover domain-specific web content.
\newblock In {\em Proceedings of the Eleventh ACM International Conference on Web Search and Data Mining}, WSDM '18, page 432–440, New York, NY, USA. Association for Computing Machinery.

\bibitem[Pham et~al., 2018b]{pham2018learning}
Pham, K., Santos, A., and Freire, J. (2018b).
\newblock Learning to discover domain-specific web content.
\newblock In {\em Proceedings of the Eleventh ACM International Conference on Web Search and Data Mining}, pages 432--440.

\bibitem[Pham et~al., 2019]{10.1145/3308558.3313709}
Pham, K., Santos, A., and Freire, J. (2019).
\newblock Bootstrapping domain-specific content discovery on the web.
\newblock In {\em The World Wide Web Conference}, WWW '19, page 1476–1486, New York, NY, USA. Association for Computing Machinery.

\bibitem[Rajiv and Navaneethan, 2021]{RAJIV20213}
Rajiv, S. and Navaneethan, C. (2021).
\newblock Keyword weight optimization using gradient strategies in event focused web crawling.
\newblock {\em Pattern Recognition Letters}, 142:3--10.

\bibitem[Rennie and McCallum, 1999]{Rennie99efficientweb}
Rennie, J. and McCallum, A. (1999).
\newblock Efficient web spidering with reinforcement learning.
\newblock In {\em Proceedings of the International Conference on Machine Learning}.

\bibitem[Saleh et~al., 2017]{SALEH2017181}
Saleh, A.~I., Abulwafa, A.~E., and {Al Rahmawy}, M.~F. (2017).
\newblock A web page distillation strategy for efficient focused crawling based on optimized naïve bayes (onb) classifier.
\newblock {\em Applied Soft Computing}, 53:181--204.

\bibitem[Shrivastava et~al., 2022]{shrivastava2022efficient}
Shrivastava, G.~K., Pateriya, R.~K., and Kaushik, P. (2022).
\newblock An efficient focused crawler using lstm-cnn based deep learning.
\newblock {\em International Journal of System Assurance Engineering and Management}, pages 1--17.

\bibitem[Suebchua et~al., 2017]{Suebchua2017EfficientTF}
Suebchua, T., Manaskasemsak, B., Rungsawang, A., and Yamana, H. (2017).
\newblock Efficient topical focused crawling through neighborhood feature.
\newblock {\em New Generation Computing}, 36:95--118.

\bibitem[Sutton and Barto, 2018]{10.5555/3312046}
Sutton, R.~S. and Barto, A.~G. (2018).
\newblock {\em Reinforcement Learning: An Introduction}.
\newblock A Bradford Book, Cambridge, MA, USA.

\bibitem[Uther and Veloso, 1998]{10.5555/295240.295802}
Uther, W. T.~B. and Veloso, M.~M. (1998).
\newblock Tree based discretization for continuous state space reinforcement learning.
\newblock In {\em Proceedings of the Fifteenth National/Tenth Conference on Artificial Intelligence/Innovative Applications of Artificial Intelligence}, AAAI '98/IAAI '98, page 769–774, USA. American Association for Artificial Intelligence.

\bibitem[Vieira et~al., 2016]{10.1007/s11280-015-0331-7}
Vieira, K., Barbosa, L., Silva, A.~S., Freire, J., and Moura, E. (2016).
\newblock Finding seeds to bootstrap focused crawlers.
\newblock {\em World Wide Web}, 19(3):449–474.

\bibitem[Vouros, 2022]{vouros2022explainable}
Vouros, G.~A. (2022).
\newblock Explainable deep reinforcement learning: state of the art and challenges.
\newblock {\em ACM Computing Surveys}, 55(5):1--39.

\bibitem[Wang et~al., 2019]{wang2019toward}
Wang, X., Ahuja, N., Llorens, N., Bansal, R., and Dhar, S. (2019).
\newblock Toward an intelligent crawling scheduler for archiving news websites using reinforcement learning.

\bibitem[Wang et~al., 2023]{wang2023hunger}
Wang, X., Chen, Z., Kong, M., and Li, B. (2023).
\newblock A hunger-based scheduling strategy for distributed crawler.
\newblock {\em Expert Systems with Applications}, 222:119798.

\bibitem[Wu et~al., 2018]{Wu2018BeyondST}
Wu, M., Hughes, M.~C., Parbhoo, S., Zazzi, M., Roth, V., and Doshi-Velez, F. (2018).
\newblock Beyond sparsity: Tree regularization of deep models for interpretability.
\newblock In {\em AAAI}.

\bibitem[Xu et~al., 2013]{10.1093/bioinformatics/btt571}
Xu, S., Yoon, H.-J., and Tourassi, G. (2013).
\newblock {A user-oriented web crawler for selectively acquiring online content in e-health research}.
\newblock {\em Bioinformatics}, 30(1):104--114.

\bibitem[Zhang et~al., 2021]{zhang2021dsdd}
Zhang, H., Santos, A., and Freire, J. (2021).
\newblock Dsdd: Domain-specific dataset discovery on the web.
\newblock In {\em Proceedings of the 30th ACM International Conference on Information \& Knowledge Management}, pages 2527--2536.

\end{thebibliography}

\newpage

\appendix

\section{Missing Proofs}
We first prove the two auxiliary propositions that use Assump-tions 1 and 2. Then we derive a lemma that provides a generic bound about the suboptimality of the objective. Finally, we customise this bound for different cases of the tree structure through Assumptions 3.a, 3.b and 3.c.\\ \\
Using Assumption 1 we prove Proposition 1: \\\\
\textit{Statement:} If Assumption 1 holds, then with probability at least $1-\epsilon$ it holds that  $\frac{V^t_R(Child_i)}{V^t_R(Parent_i)}<1$ for all pairs of nodes $V^t_R(Child_i),V^t_R(Parent_i)$ that lie along the path from the root to leaf $L_t^*$ for all t in $[1,2,...T]$.
\begin{proof}
In each node of the decision tree let $R$ be the random variable of the reward and $X$ the random vector corresponding to the representation of the node elements. The splitting rule is a Bernoulli variable of the form $S=\mathbbm{1}\{X[i]>b\}$, for some feature $i$ and some threshold $b$. In particular, the rule is chosen which maximises the empirical variance reduction calculated on the samples. Variance is calculated with respect to the reward variable R. Let $V^t_R(Parent)$ be the reward variance of a parent node on timestep t. The parent node is split into two child nodes. Let $V^t_R(Child)$ be the reward variance of some child node on timestep t. $V^t_R(Parent)=Var(R)$ and $V^t_R(Child)=Var(R|S)$. We will use an inequality that holds for the conditional variance of two correlated variables in order to bound the child variance from above. $\mathbb{E}[Var(y|R)]\leq Var(R)-\frac{Cov^2(R,S)}{Var(S)}$. If $Var(R)=0$ then $\mathbb{E}[Var(R|S)]$ is zero as well. Else we have
% $\mathbb{E}[\frac{Var(y|R)}{Var(y)}]\leq 1-\frac{Cov^2(y,R)}{Var(R)\cdot Var(y)} \leq 1-16Cov^2(y,R)$, because both y and R are Bernoulli random variables, so their variance is at most $max_{p \in (0,1)}\{p(1-p)\}=\frac{1}{4}$. We assume $|Cov(y,R)|\geq C$.
$\mathbb{E}[\frac{Var(R|S)}{Var(R)}]\leq 1-\frac{Cov^2(R,S)}{Var(S)\cdot Var(R)} \leq \delta$, assuming that the  squared Pearson product-moment correlation coefficient of R and S is greater than $1-\delta$. This hypothesis is reasonable, because we expect that the splitting rule which causes the greatest variance reduction is highly correlated to the reward variable.\newline
Applying Markov's inequality we obtain $\mathbb{P}[\frac{Var(R|S)}{Var(R)}\geq a] \leq \frac{\mathbb{E}[\frac{Var(R|S)}{Var(R)}]}{a}\leq \frac{\delta}{a}$. We set $a=1$.
Thus, with probability at least $1-\delta$ the ratio of variance between the child and the parent node is less than one. We want the variance ratio condition to hold for all timesteps along the path leading to the optimal leaf node. Let $h(t)$ be the height of the leaf containing $x^*_t$ at timestep $t$. We set $T\cdot h_{max}\cdot\delta=\epsilon$ and taking a union bound the condition is satisfied for all timesteps with probability at least $1-\epsilon$. This completes the proof. \end{proof}
\noindent Using the result in Proposition 1 along with Assumption 2 we can prove Proposition 2:\\\\
\textit{Statement:} 
     Given Assumptions 1 and 2 it holds that $V_q(L_t^*)<\lambda^{h(t)} {V_q}^t(Root)$ for all $t$ in $[1,2,...,T]$ with probability at least $1-2\epsilon$, where $h(t)$ is the height of leaf $L_t^*$ containing $x^*_t$.
\begin{proof}
% If the event described in Proposition 1 is true and Assumption 2 holds, then with probability at least $1-\epsilon$ it holds that $V_q(L_t^*)<\lambda^{h(t)} {V_q}^t(Root)$ for all $t$ in $[1,2,...,T]$,. Also, the event described in Proposition 1 is realised with probability at least $1-\epsilon$, given that Assumption 1 holds.
% The Split Q-Reward Assumption implies that inside a leaf, similar rewards correspond to similar Q-values (produced by $\hat{Q}_t$). We consider that all high rewards (ones if the reward function is binary) come from high Q-values. In other words, in our crawling setting, we consider that our agent has higher Q-values for the most URLs corresponding to relevant web pages. 
Considering that $Root$ is the tree root node and $h(t)$ is the height of $Child$ at timestep t, it holds that 

\setcounter{equation}{1}
\begin{align}
    V^t_q(Child) &= \left(\frac{V^t_R(Child)}{V^t_R(Parent)} + e_{h(t)}\right) \cdot V^t_q(Parent) \nonumber \\
    &= \prod_{i=1}^{h(t)}\left(\frac{V^t_R(Child_i)}{V^t_R(Parent_i)} + e_i\right) \cdot V^t_q(Root) \nonumber \\
    &= \prod_{i=1}^{h(t)}\left(\frac{Var(R_i|S_i)}{Var(R_i)}+ e_i\right) \cdot V^t_q(Root) \nonumber \\
    &< \left( \prod_{i=1}^{h(t)}\left(1+ e_i\right) \right)\cdot V^t_q(Root) \label{eu_eqn} ,\\
    &\text{\quad for all $t$ in $[1,2,...,T]$, with probability at least $1-\epsilon$} \nonumber
    % \\ 
    % &< (1 + \epsilon)^{h(t)} \cdot V^t_q(Root)
\end{align}
We consider the noise variables $l_i=\log(1+ e_i)$. Since the $e_i$ variables are i.i.d. the same holds for the $l_i$ variables. We have assumed that their range is upper bounded by $w$. Their mean can be bounded by applying Jensen's inequality: $\mathbb{E}[l_i]=\mathbb{E}[\log(e_i+1)]\leq log(\mathbb{E}[e_i]+1)<log(\lambda)-\frac{w\sqrt{\log(T/\epsilon)}}{\sqrt{2h(t)}}:=l$, since $ \mathbb{E}[e_i]<\lambda\cdot exp\left(-\frac{w\sqrt{\log(T/\epsilon)}}{\sqrt{2h(t)}}\right)-1 \quad\forall i \in [1,h(t)]$.\\
From Hoeffding's inequality we have that $$\mathbb{P}\left[\sum\limits_{i=1}^{h(t)}l_i-h(t)l\geq s\right]\leq\mathbb{P}\left[\sum\limits_{i=1}^{h(t)}l_i-h(t)\mathbb{E}[l_i] \geq s\right]\leq  exp \left( \frac{-2s^2}{h(t)w^2}\right)$$ where $w$ bounds the range of $l_i$ and $s>0$.  \\
We set $s=s_{\epsilon}:=\frac{w\sqrt{h(t)\cdot \log(T/\epsilon)}}{\sqrt{2}}$. Then we have $\mathbb{P}\left[\sum\limits_{i=1}^{h(t)}l_i-h(t)l>s_{\epsilon}\right]\leq T/\epsilon$. Taking a union bound over $t\in[1,...T]$ we obtain that with probability at least $1-\epsilon$ we have for all $t\in[1,...T]$: $$\sum\limits_{i=1}^{h(t)}l_i-h(t)l<s_{\epsilon} \Leftrightarrow \sum\limits_{i=1}^{h(t)}\log(1+ e_i)<h(t)\cdot \log(\lambda)\Leftrightarrow\prod\limits_{i=1}^{h(t)}\left(1+ e_i\right)<\lambda^{h(t)}$$
Combining the above with inequality \ref{eu_eqn} we obtain the desired result, that is $V_q(L_t^*)<\lambda^{h(t)} {V_q}^t(Root)$ for all $t$ in $[1,2,...,T]$.\\
The inequality $\prod\limits_{i=1}^{h(t)}\left(1+ e_i\right)<\lambda^{h(t)}$, $t\in[1,...T]$ holds with probability at least $1-\epsilon$ and inequality \ref{eu_eqn} holds for all $t$ in $[1,2,...,T]$, with probability at least $1-\epsilon$. Thus, from union  bound, they both hold with probability at least $1-2\epsilon$. The desired result only depends on these two inequalities and thus it holds with  probability at least $1-2\epsilon$.\\
\end{proof}
Next we will prove an auxiliary lemma that uses Assumptions 1 and 2 and provides a generic bound about the suboptimality of the objective.
\newtheorem{lemma}{Lemma}
\begin{lemma}
If Assumptions 1 and 2 hold, then $$\mathbb{P}\left\{J-J^* \geq  - \frac{\sqrt{T}}{\sqrt{\epsilon}}\cdot\sum\limits_{t=1}^{T}\left[\left(\sqrt{Dt^3+t^2} \right){\sqrt{\lambda}}^{h(t)} \right] \right\} \geq 1-3\epsilon$$ where $T$ is the total number of timesteps, $D$ is an upper bound for the number of new URLs inserted in frontier at each timestep and parameter $\lambda$ is introduced in Assumption 2.    
\end{lemma}
\begin{proof}
Let $d_{L^*_t}$ be the distribution Q-values of $L_t^*$, the leaf which contains $x^*_t$ at timestep t. From Chebyshev's inequality we obtain:\\
$\mathbb{P}[|\hat{q}_t-\mu_{L^*_t}| \geq k\sigma_{L^*_t}] \leq \frac{1}{k^2}$
Thus, from union bound, with probability at least $1-\frac{|L^*_t|}{k^2}$ all frontier samples of $L_t^*$ lie within the interval $[\mu_{L^*_t}-k\sigma_{L^*_t},\ \mu_{L^*_t}+k\sigma_{L^*_t}]$. The interval has length $2k\sigma_{L^*_t}$, so the difference between any two elements of the  interval is at most $2k\sigma_{L^*_t}$. We set $k=\sqrt{\frac{|L^*_t|}{\epsilon_0}}$, so the probability that $\hat{Q}_t(x^*_t) - \hat{q}_t \leq 2k\sigma_{L^*_t}$ is at least $1-\epsilon_0$. We also set $T\cdot\epsilon_0=\epsilon$, so that the condition holds for all timesteps with high probability (from union bound over T timesteps).\\
$\mathbb{P}\left\{\hat{Q}_t(x^*_t) - \hat{q}_t \leq 2\sqrt{\frac{|L^*_t|}{\epsilon_0}}\sigma_{L^*_t}\right\} \geq 1-\epsilon_0 \Rightarrow
\mathbb{P}\left\{\hat{Q}_t(x^*_t) - \hat{q}_t \leq 2\sqrt{\frac{T(D\cdot t+1)}{\epsilon}}\sigma_{L^*_t}\right\} \geq 1-\frac{\epsilon}{T}$\\ \\
Recall that $J=J^* - \sum\limits_{t=1}^{T}\left[\left(1 -  \frac{1}{|L_t^*|}\right) (\hat{Q}_t(x^*_t) - \hat{q}_t) \right]$. 
Using the probabilistic bound for $\hat{Q}_t(x^*_t) - \hat{q}_t$ we yield:
\begin{align}
\mathbb{P}\left\{      J-J^* \geq  - \sum\limits_{t=1}^{T}\left[2\left(1 - \frac{1}{|L_t^*|}\right) \sqrt{\frac{|L^*_t|}{\epsilon_0}}\sigma_{L^*_t} \right] \right\} \geq 1-\epsilon \Rightarrow \nonumber\\
\mathbb{P}\left\{      J-J^* \geq  - \sum\limits_{t=1}^{T}\left[2\left(1 - \frac{1}{Dt+1}\right) \sqrt{\frac{T(Dt+1)}{\epsilon}}\sigma_{L^*_t} \right] \right\} \geq 1-\epsilon,  \nonumber\\
\end{align}
because $|L_t^*|\leq Dt+1$ and function $f(|L_t^*|) =\left(1 - \frac{1}{|L_t^*|}\right) \sqrt{|L^*_t|}$ is increasing on $|L_t^*|$. \\ \\
From  Proposition 2 we have that given Assumptions 1 and 2 it holds that $V_q(L_t^*)<\lambda^{h(t)} {V_q}^t(Root)$ for all $t$ in $[1,2,...,T]$ with probability at least $1-2\epsilon$, where $h(t)$ is the height of leaf $L_t^*$ containing $x^*_t$. Thus, for the leaf containing $x^*_t$ we have $\sigma_{L^*_t}^2<\lambda^{h(t)} V^t_q(Root)\leq \frac{\lambda^{h(t)} \cdot t^2}{4}$, with probability at least $1-2\epsilon$, where the second inequality follows from Popoviciu's inequality on variances and the fact that Q-values are in the range $[0,t]$. We combine this result with inequality A.3, taking a lower bound over the two errors and we obtain:
\allowdisplaybreaks
\begin{align}
&\mathbb{P}\left\{J-J^* \geq  - \sum\limits_{t=1}^{T}\left[\left(1 - \frac{1}{Dt+1}\right) \sqrt{\frac{T(Dt+1)}{\epsilon}}\sqrt{\lambda^{h(t)} \cdot t^2} \right] \right\} \geq 1-3\epsilon \xRightarrow{} \tag{3.1}\\
&\mathbb{P}\left\{J-J^* \geq  - \frac{\sqrt{T}}{\sqrt{\epsilon}}\cdot\sum\limits_{t=1}^{T}\left[\left(\sqrt{Dt^3+t^2} \right){\sqrt{\lambda}}^{h(t)} \right] \right\} \geq 1-3\epsilon \tag{3.2}
\end{align}
The above calculations are based on the observation that for a random variable $X$ the probability $P\{X>a\}$ is greater than $P\{X>b\}$ for $a<b$. Thus, we can go from step to step by reducing the quantity that bounds $J-J^*$ from below. To go from step $(3.1)$ to $(3.2)$ we omit the term $- \frac{1}{Dt+1}$ in $\left(1 - \frac{1}{Dt+1}\right)$. 
\end{proof}
Next we give the proof of Theorem 3.a.\\
\noindent\textit{Statement} If Assumptions 1, 2 and 3.a hold, then $$\mathbb{P}\left\{J-J^* \geq  - \frac{\sqrt{T(D+1)}}{\sqrt{\epsilon\lambda}}\cdot \frac{\Bar{\lambda}(\Bar{\lambda} + 1)( \Bar{\lambda}^T-1)}{(\Bar{\lambda} - 1)^3}
\right\} \geq 1-3\epsilon$$ where $T$ is the total number of timesteps, $D$ is an upper bound for the number of new URLs inserted in frontier at each timestep, parameter $\lambda$ is introduced in Assumption 2 and $\Bar{\lambda}=\sqrt[2M_s]{\lambda}$.

\begin{proof}
Assumption 3.a implies for the height $h(t)$ of the optimal leaf $L_t^*$ that 
    $h(t) \geq floor(t / M_s)$, where $M_s$ is constant (independent of $T$). 
Starting from the result in Lemma 1 and using Assumption 3.a we obtain:
\allowdisplaybreaks
\begin{align}
&\mathbb{P}\left\{J-J^* \geq  - \frac{\sqrt{T}}{\sqrt{\epsilon}}\cdot\sum\limits_{t=1}^{T}\left[\left(\sqrt{Dt^3+t^2} \right){\sqrt{\lambda}}^{h(t)} \right] \right\} \geq 1-3\epsilon  \xRightarrow{} \tag{3.2}\\
&\mathbb{P}\left\{J-J^* \geq  - \frac{\sqrt{T(D+1)}}{\sqrt{\epsilon\lambda}}\cdot\sum\limits_{t=1}^{T}\left[t^{\frac{3}{2}}{\sqrt{\lambda}}^{\frac{t}{M_s}} \right] \right\} \geq 1-3\epsilon \xRightarrow[]{} \tag{3.3}\\
&\mathbb{P}\left\{J-J^* \geq  - \frac{\sqrt{T(D+1)}}{\sqrt{\epsilon\lambda}}\cdot\sum\limits_{t=1}^{T}t^2\cdot{\Bar{\lambda}}^t \right\} \geq 1-3\epsilon \Leftrightarrow \tag{3.4}\\
&\mathbb{P}\left\{J-J^* \geq  - \frac{\sqrt{T(D+1)}}{\sqrt{\epsilon\lambda}}\cdot \frac{\Bar{\lambda} [((\Bar{\lambda} - 1)^2 T^2 - 2 (\Bar{\lambda} - 1) T + \Bar{\lambda} + 1) \Bar{\lambda}^T - \Bar{\lambda} - 1]}{(\Bar{\lambda} - 1)^3}
\right\} \geq 1-3\epsilon \xRightarrow[]{}  \tag{3.5}\\
&\mathbb{P}\left\{J-J^* \geq  - \frac{\sqrt{T(D+1)}}{\sqrt{\epsilon\lambda}}\cdot \frac{\Bar{\lambda}(\Bar{\lambda} + 1)( \Bar{\lambda}^T-1)}{(\Bar{\lambda} - 1)^3}
\right\} \geq 1-3\epsilon \xRightarrow[]{}  \tag{3.6}\\
% &\mathbb{P}\left\{J-J^* \geq  - \frac{\sqrt{T}}{\sqrt{\epsilon\lambda}}\cdot\sum\limits_{t=1}^{T}\left[(Dt+1){\sqrt{\lambda}}^{\frac{t}{M_s}} \right] \right\} \geq 1-3\epsilon\xRightarrow[]{} \tag{3.3}\\
% &\mathbb{P}\left\{J-J^* \geq   \frac{\sqrt{T}}{\sqrt{\epsilon\lambda}}\cdot \frac{\Bar{\lambda}}{(1-\Bar{\lambda})^2}\cdot\left( 
% (D T + D + 1) {\Bar{\lambda}}^T - (D T + 1) \Bar{\lambda}^{(T + 1)} + \Bar{\lambda} - D - 1
% \right)\right\} \geq 1-3\epsilon\Leftrightarrow \tag{3.4}\\
% &\mathbb{P}\left\{J-J^* \geq   \frac{\sqrt{T}}{\sqrt{\epsilon\lambda}}\cdot \frac{\Bar{\lambda}}{(1-\Bar{\lambda})^2}\cdot\left[(\Bar{\lambda}^T-1)(D+1-\Bar{\lambda})+\Bar{\lambda}^T DT (1-\Bar{\lambda})\right] \right\} \geq 1-3\epsilon   \xRightarrow{} \tag{3.5} \\
% &\mathbb{P}\left\{J-J^* \geq   \frac{\sqrt{T}}{\sqrt{\epsilon\lambda}}\cdot \frac{\Bar{\lambda}}{(1-\Bar{\lambda})^2}\cdot(\Bar{\lambda}^T-1)(D+1-\Bar{\lambda}) \right\} \geq 1-3\epsilon  \tag{3.6} \\
\nonumber \end{align}
\\
The step from $(3.2)$ to $(3.3)$ follows from the inequality $h(t)\geq \frac{t}{M_s}-1$ and the observation that $Dt^3+t^2$ is greater than $(D+1)t^3$ . The transition from $(3.3)$ to $(3.4)$ can be made because $t^{\frac{3}{2}}<t^2$. Also, we set $\Bar{\lambda}:=\sqrt[2M_s]{\lambda}$. To go from $(3.4)$ to $(3.5)$ we use a (classical) formula for the sums $\sum_{k=1}^{n}k^2\cdot a^k$ and make an appropriate grouping of the terms, which will help us get rid of terms proportional to $T$ and $T^2$ and thus overcome the polynomial dependence of the error on $T$. The step from  $(3.5)$ to $(3.6)$ follows from the fact that $\Bar{\lambda}<1$, so the coefficients of $T^2$ and $T$ are positive.
\end{proof}
% \noindent We conclude that the absolute error is proportional to the square root of time horizon $T$. It is also proportional to the square root of $D$, where $D$ is the maximum number of new frontier samples at each timestep. Finally, there is a dependence on  $\frac{1}{\sqrt[2M_s]{\lambda}-1}$, where $\lambda$ is the decrease rate of the variance of Q-values along a path from the root to a leaf.

\noindent
Next we give the proof of Theorem 3.b.\\

\medskip

\noindent\textit{Statement}
If Assumptions 1, 2 and 3.b hold and parameter $\lambda$ introduced in Assumption 2 is less than $\frac{1}{8}$, then $\mathbb{P}\left\{J-J^* \geq  - \frac{T\sqrt{T(D+1)}}{\sqrt{\epsilon}} \right\} \geq 1-3\epsilon$, where $T$ is the total number of timesteps, $D$ is an upper bound for the number of new URLs inserted in frontier at each timestep and $\Bar{\lambda}=\sqrt[2M_s]{\lambda}$.
\begin{proof}
Starting from the result in Lemma 1 and using Assumption 3.b we obtain:\\
\allowdisplaybreaks
\begin{align}
&\mathbb{P}\left\{J-J^* \geq  - \frac{\sqrt{T}}{\sqrt{\epsilon}}\cdot\sum\limits_{t=1}^{T}\left[\left(\sqrt{Dt^3+t^2} \right){\sqrt{\lambda}}^{h(t)} \right] \right\} \geq 1-3\epsilon  \xRightarrow[(D+1)t^3\geq Dt^3+t^2]{h(t)\geq\log_2(t)} \tag{3.2}\\
&\mathbb{P}\left\{J-J^* \geq  - \frac{\sqrt{T(D+1)}}{\sqrt{\epsilon}}\cdot\sum\limits_{t=1}^{T}\left[t^{\frac{3}{2}}{\sqrt{\lambda}}^{\log_2(t)} \right] \right\} \geq 1-3\epsilon \Leftrightarrow \tag{3.3'}\\
&\mathbb{P}\left\{J-J^* \geq  - \frac{\sqrt{T(D+1)}}{\sqrt{\epsilon}}\cdot\sum\limits_{t=1}^{T}\left[t^{\frac{3}{2}}{t}^{\log_2(\sqrt{\lambda})} \right] \right\} \geq 1-3\epsilon \xRightarrow{\lambda < \frac{1}{8}} \tag{3.4'}\\
&\mathbb{P}\left\{J-J^* \geq  - \frac{\sqrt{T(D+1)}}{\sqrt{\epsilon}}\cdot\sum\limits_{t=1}^{T}\left[t^{\frac{3}{2}}{t}^{\log_2(\sqrt{\frac{1}{8}})} \right] \right\} \geq 1-3\epsilon \Leftrightarrow \tag{3.5'}\\
&\mathbb{P}\left\{J-J^* \geq  - \frac{\sqrt{T(D+1)}}{\sqrt{\epsilon}}\cdot\sum\limits_{t=1}^{T}[1] \right\} \geq 1-3\epsilon \Leftrightarrow \mathbb{P}\left\{J-J^* \geq  - \frac{T\sqrt{T(D+1)}}{\sqrt{\epsilon}} \right\} \geq 1-3\epsilon \tag{3.6'}\\
\nonumber \end{align}
\end{proof}

\noindent
Finally, we give the proof of Theorem 3.c.\\

\medskip

\noindent\textit{Statement} If Assumptions 1, 2 and 3.b hold and parameter $\lambda$ introduced in Assumption 2 is less than $\frac{1}{32}$, then $\mathbb{P}\left\{J-J^* \geq  - \frac{\sqrt{T(D+1)}}{\sqrt{\epsilon}}\cdot(ln(T)+1) \right\} \geq 1-3\epsilon$, where $T$ is the total number of timesteps, $D$ is an upper bound for the number of new URLs inserted in frontier at each timestep and 
\begin{proof}
Starting from the result in Lemma 1 and using Assumption 3.b we obtain:\\
\allowdisplaybreaks
\begin{align}
&\mathbb{P}\left\{J-J^* \geq  - \frac{\sqrt{T(D+1)}}{\sqrt{\epsilon}}\cdot\sum\limits_{t=1}^{T}\left[t^{\frac{3}{2}}{t}^{\log_2(\sqrt{\lambda})} \right] \right\} \geq 1-3\epsilon \xRightarrow{\lambda < \frac{1}{32}} \tag{3.4'}\\
&\mathbb{P}\left\{J-J^* \geq  - \frac{\sqrt{T(D+1)}}{\sqrt{\epsilon}}\cdot\sum\limits_{t=1}^{T}\left[t^{\frac{3}{2}}{t}^{\log_2(\sqrt{\frac{1}{32}})} \right] \right\} \geq 1-3\epsilon \Leftrightarrow \tag{3.5''}\\
&\mathbb{P}\left\{J-J^* \geq  - \frac{\sqrt{T(D+1)}}{\sqrt{\epsilon}}\cdot\sum\limits_{t=1}^{T}\frac{1}{t} \right\} \geq 1-3\epsilon \xRightarrow{H_n < ln(n)+1}\tag{3.6''}\\
&\mathbb{P}\left\{J-J^* \geq  - \frac{\sqrt{T(D+1)}}{\sqrt{\epsilon}}\cdot(ln(T)+1) \right\} \geq 1-3\epsilon \tag{3.7''}\\
\nonumber \end{align}
\end{proof}

\newpage

\section{Missing Algorithm}

    \begin{algorithm*}[h!]
    \caption{Tree REinforcement Spider (TRES)}\label{alg:3}
        \hspace*{\algorithmicindent} \textbf{Input:} Seed set $U_S$, initial keyword set $KS$, a dataset of web pages $D_{tr}$, the total number of URL fetches $T$ \\
        \hspace*{\algorithmicindent} \, \, \, \, \, \, \, the maximal number of domain visits $MAX$\\
        \hspace*{\algorithmicindent} \textbf{Output:} a list $\pmb{res}$ of the fetched URLs 
    \begin{algorithmic}[1]
        \State KS = KeywordExpansionStrategy($KS$, $D_{tr}$) \Comment{Algorithm 1}
        \State Learn the reward function $R$, by training a KwBiLSTM using $D_{tr}$ and $K$
        \State Initialize state $S_0 = g_0$ leveraging the experience from the relevant URLs in $U_S$ 
        \State Initialize experience samples $D_E(0)$ and frontier samples $D_F(0)$ using the seed set $U_S$
        \State Initialize closure $C_0$ and Tree-Frontier $F_0$ using $D_E(0)$ and $D_F(0)$
        \State Initialize timestep counter $t = 0$
        \State Initialize $e_{new} = \{\}$ and $f_{new} = \{\}$
        \State Initialize Experience Replay buffer $B$ with experience samples $D_E(0)$
        \State Initialize an empty list $\pmb{res}$
        \While{$t < T$}
            \State Train the DDQN agent on a minibatch of $B$ 
            % \State $\tau = t - 1$
            \State Get a state-action $\pmb{x}(s_{t},a_{t})$, $D_E(t+1)$ and $D_F(t+1)$ = TreeFrontierUpdate($F_{t}$, $D_E({t})$, $D_F({t})$, $e_{new}$, $f_{new}$) \textbf{s.t}:
            \State \, \, \, \, $a_{t} \rightarrow$  URL $u^{(t+1)} \not \in C_{t}$ \textbf{and} $MAX$ condition holds \Comment{Algorithm 3}
            
            \State Transition to a new state $S_{t+1} = g_{t+1}$ by expanding $g_t$ with the edge $a_{t}$
            \State Fetch URL $u^{(t+1)}$ and observe its reward $r_{t+1}$ 
            \State Extract outlink URLs $Outlink(u^{(t+1)})$ to create new frontier samples $f_{new}$
            \State Create new experience sample $e_{new} = (\pmb{x}(s_{t},a_{t}), r_{t+1})$
            \State Update experience replay buffer $B$ with $e_{new}$
            \State Update closure $C_{t+1}$ with the new fetched URL $u^{(t+1)}$
            \State Append $u^{(t+1)}$ to $\pmb{res}$
            \State $t = t + 1$
        \EndWhile
    \end{algorithmic}
    \end{algorithm*}

% \section{Harvest Rate Plots}

% \begin{figure}[h!]
%         \centering
%         \begin{subfigure}[t]{0.46\textwidth}
%             \includegraphics[width=\textwidth]{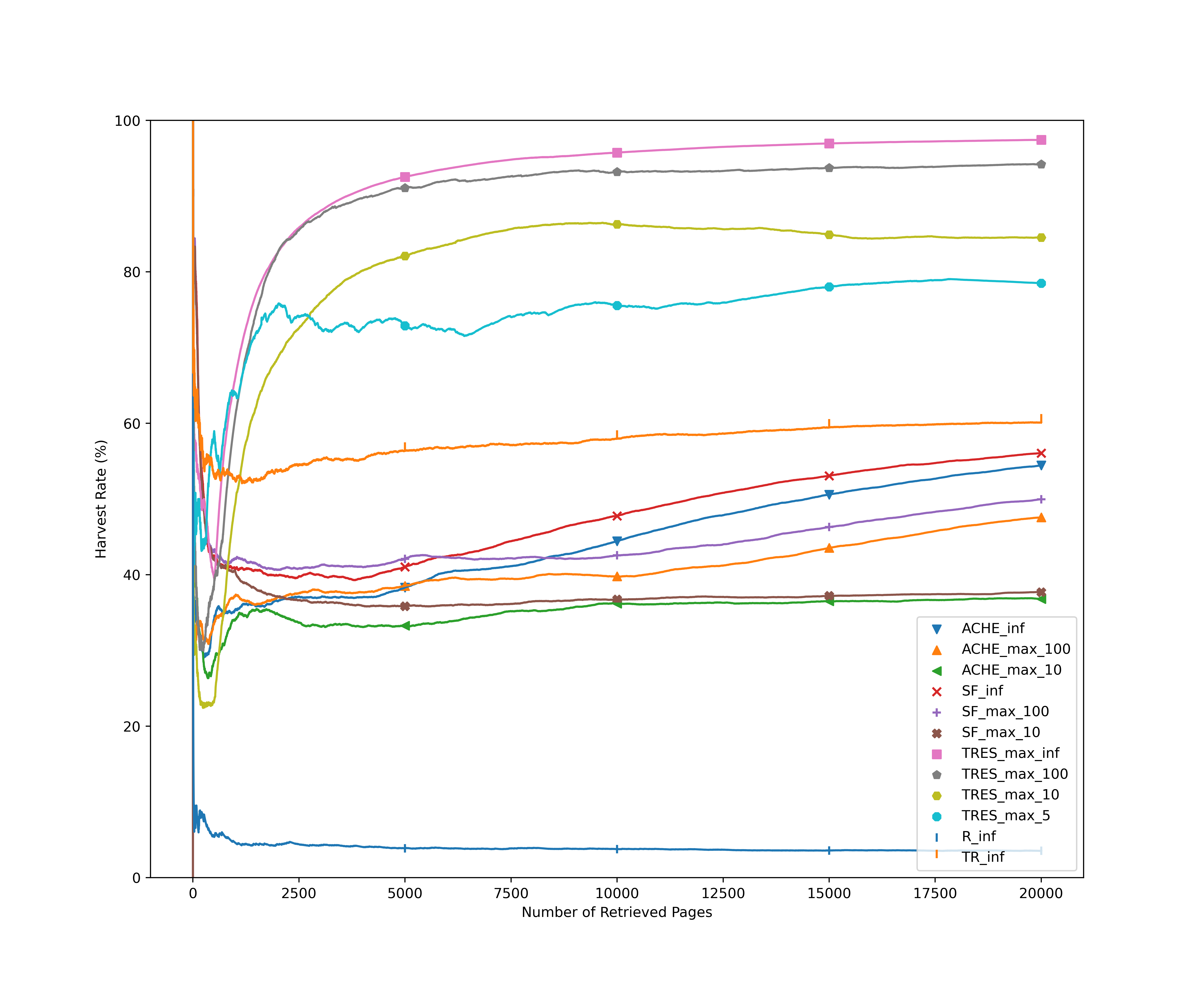}
%             % \caption{Harvest Rate diagram for the Sports domain}
%         \end{subfigure}
%         \hfill
%         \begin{subfigure}[t]{0.46\textwidth}
%             \includegraphics[width=\textwidth]{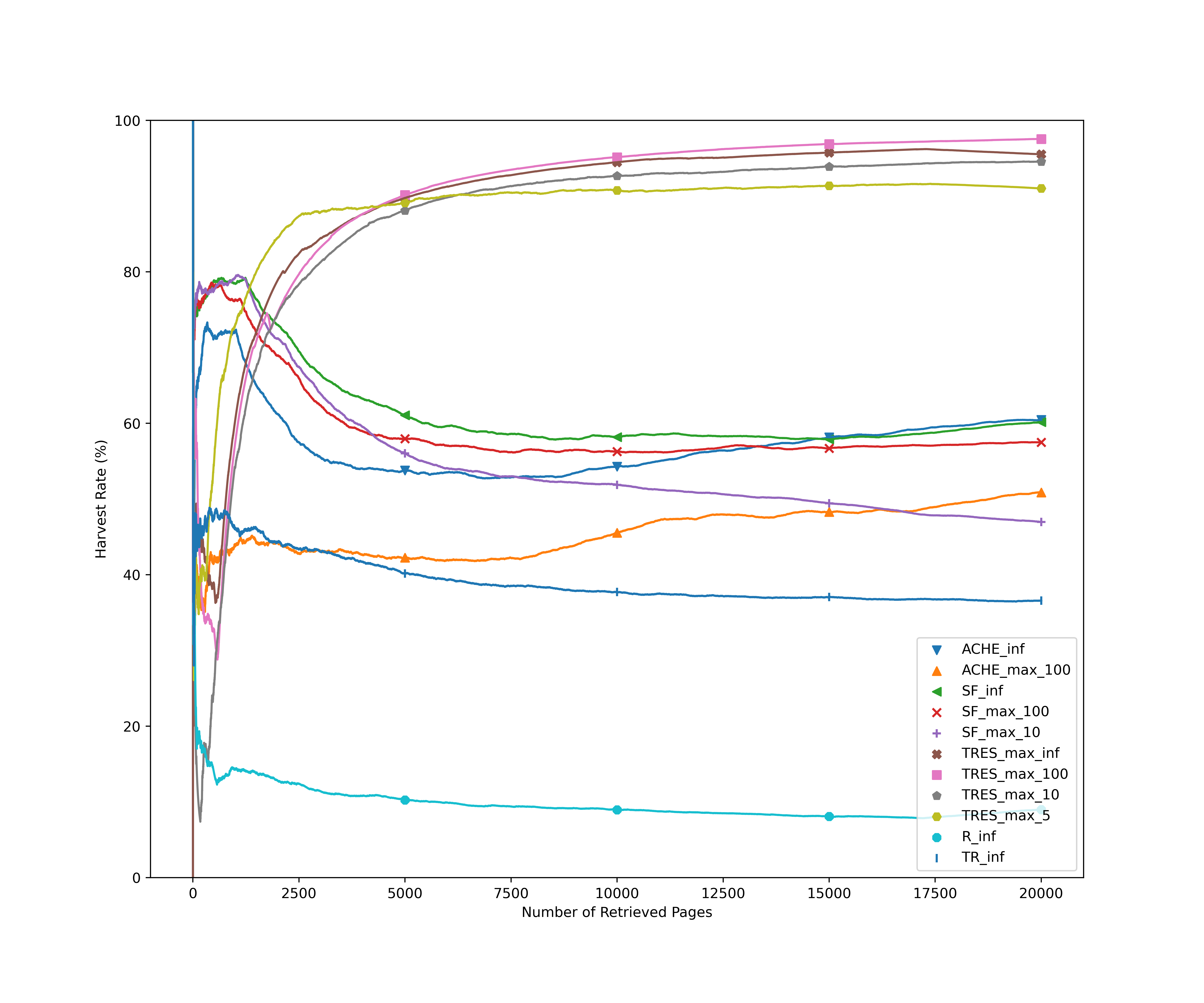}
%             % \caption{Harvest Rate diagram for the Food domain}
%         \end{subfigure}\\
%         \begin{subfigure}[t]{0.46\textwidth}
%             \centering
%             \includegraphics[width=\textwidth]{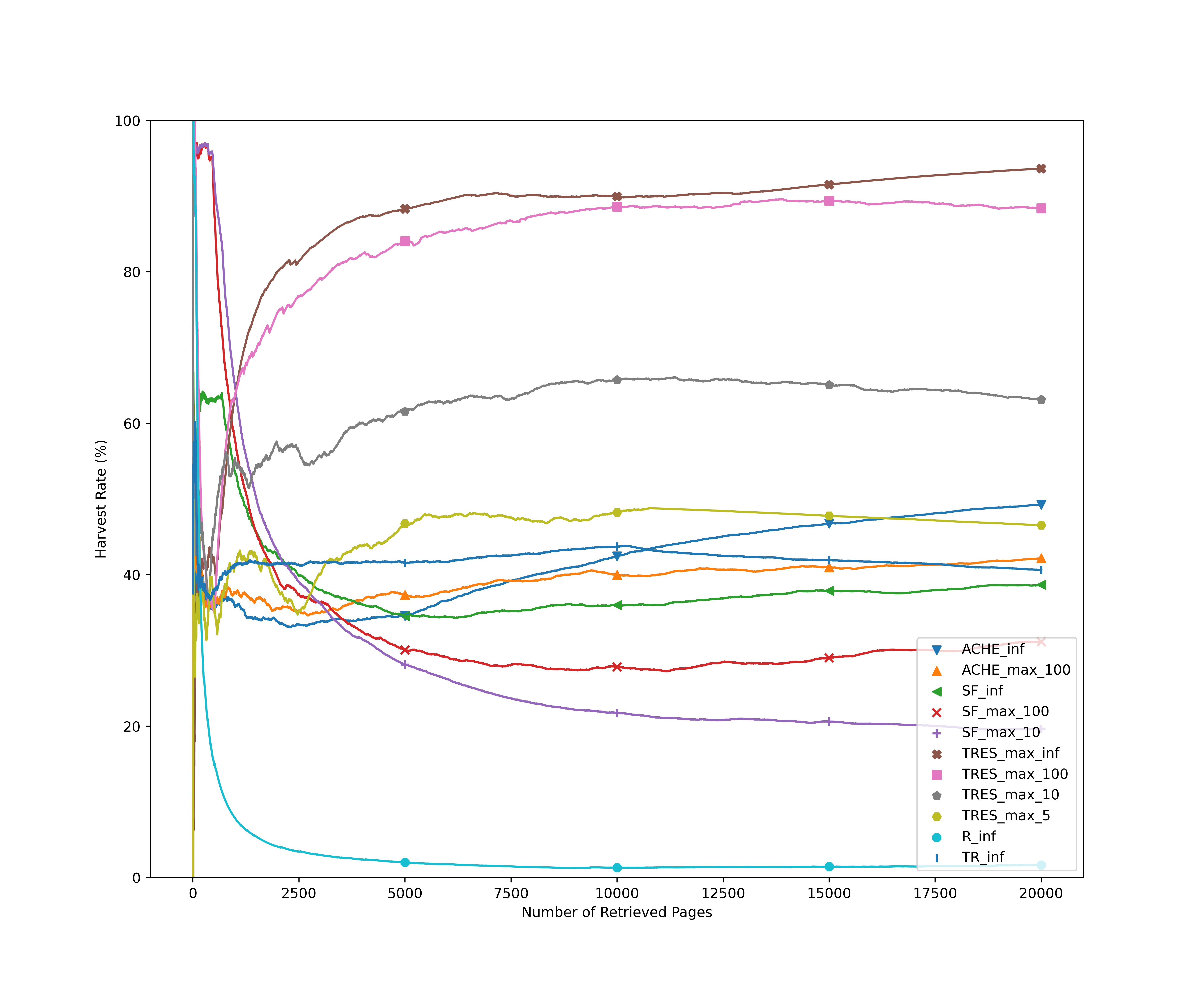}
%             % \caption{Harvest Rate diagram for the Hardware domain}
%         \end{subfigure}\\
%         \caption{Harvest Rate diagrams: Sports (Up left), Food (Up right) and Hardware (Down)}
%         \label{fig:harvest}
% \end{figure}

%% else use the following coding to input the bibitems directly in the
%% TeX file.

% \begin{thebibliography}{00}

% %% \bibitem{label}
% %% Text of bibliographic item

% \bibitem{}

% \end{thebibliography}
\end{document}